\newtheorem{proposition}{Proposition}
\newtheorem{proposition?}{Proposition?}
\newtheorem{theorem}{Theorem}
\theoremstyle{definition}
\newtheorem{definition}{Definition}
\newcommand{\real}{\mathbb R} %real
\newcommand{\complex}{\mathbb C} %complex
\newcommand{\nat}{\mathbb N} %natural
\newcommand{\integer}{\mathbb Z} %integer
\newcommand{\half}{\frac{1}{2}} %half
\newcommand{\hi}{\mathcal{H}} %Hilbert space H
\newcommand{\lh}{\mathcal{L(H)}} %bounded linear operators
\newcommand{\kb}[2]{|#1\rangle\langle#2|} %ketbra
\newcommand{\tr}[1]{\mathrm{tr}\left[#1\right]} %trace
\newcommand{\ptr}[1]{\mathrm{tr}_1[#1]} %partial trace_1
\newcommand{\id}{\mathbbm{1}} %identity operator
\newcommand{\A}{\mathsf{A}}%generic observable
\newcommand{\B}{\mathsf{B}}%generic observable
\newcommand{\F}{\mathsf{F}}%generic observable
\newcommand{\G}{\mathsf{G}}%generic joint observable
\newcommand{\X}{\mathsf{X}}%generic joint observable
\newcommand{\Y}{\mathsf{Y}}%generic joint observable
\newcommand{\Z}{\mathsf{Z}}%generic joint observable
\newcommand{\Mo}{\mathsf{M}}
\newcommand{\M}{\mathcal{M}}%set of linear maps
\newcommand{\ebc}{\mathbf{EBC}}
\newcommand{\ibc}{\mathbf{IBC}}
\newcommand{\nibc}{\mathbf{n}-\mathbf{IBC}}
\newcommand{\mibc}{\mathbf{m}-\mathbf{IBC}}
\newcommand{\ch}{\mathbf{C}}
\newcommand{\wnc}{\Gamma^{\mathrm{wn}}} %white noise mixing channel
\newcommand{\h}{\mathcal}
\begin{document}
\title[]{Incompatibility breaking quantum channels}

\begin{abstract} 
A typical bipartite quantum protocol, such as EPR-steering, relies on two quantum features, entanglement of states and incompatibility of measurements.
Noise can delete both of these quantum features.
In this work we study the behavior of incompatibility under noisy quantum channels. The starting point for our investigation is the observation that compatible measurements cannot become incompatible by the action of any channel. 
We focus our attention to channels which completely destroy the incompatibility of various relevant sets of measurements. We call such channels \emph{incompatibility breaking}, in analogy to the concept of entanglement breaking channels. 
This notion is relevant especially for the understanding of noise-robustness of the local measurement resources for steering. 
\end{abstract}

\author[Heinosaari]{Teiko Heinosaari$^\natural$}
\address{$\natural$ Turku Centre for Quantum Physics, Department of Physics and Astronomy, University of Turku, FI-20014 Turku, Finland}
\email{teiko.heinosaari@utu.fi}

\author[Kiukas]{Jukka Kiukas$^\flat$}
\address{$\flat$ School of Mathematical Sciences, University of Nottingham, University Park,
Nottingham, NG7 2RD, UK}
\email{jukka.kiukas@nottingham.ac.uk}

\author[Reitzner]{Daniel Reitzner$^{\clubsuit}$}
\address{$\clubsuit$ Institute of Physics, Slovak Academy of Sciences, D\'ubravsk\'a cesta 9, 845 11 Bratislava, Slovakia}
\email{daniel.reitzner@savba.sk}

\author[Schultz]{Jussi Schultz$^\dagger$}
\address{$\dagger$ Dipartimento di Matematica, Politecnico di Milano, Piazza Leonardo da Vinci 32, I-20133 Milano, Italy, and Turku Centre for Quantum Physics, Department of Physics and Astronomy, University of Turku, FI-20014 Turku, Finland}
\email{jussi.schultz@gmail.com}

\maketitle

%%%%%%%%%%%%%%%%%%%%%%%
\section{Introduction}
%%%%%%%%%%%%%%%%%%%%%%%

In a typical quantum information task such as quantum key distribution or teleportation, one party, Alice, prepares a bipartite quantum system in some state, sends one of the systems to another party, Bob, after which they perform some measurements on their respective component systems. In order for such a setup to yield some advantage over a corresponding classical scenario, it is crucial that it relies on some genuine quantum feature of its constituent parts. Usually this means that the state shared by Alice and Bob should be entangled.  In order to make use of an entangled state, Alice and Bob need to perform appropriate quantum measurements, and there can be a qualitative feature that these measurements must have.
A notable requirement is {\em incompatibility}, i.e., Alice and Bob need measurements that cannot be performed jointly with a common device.
In particular, incompatibility is essential for one-sided  quantum key distribution protocols based on Einstein-Podolsky-Rosen steering \cite{BrCaWaScWi12,QuVeBr14,UoMoGu14}. 
For this reason, both entanglement and incompatibility can be viewed as quantum resources whose understanding is of utmost importance in view of emerging technological applications.

Incompatible quantum measurements, understood mathematically as POVMs without a common refinement, have been studied for a long time; see e.g. \cite{Ludwig64, BuLa84, Busch86, dMMa90} for early studies and \cite{WoPeFe09,HeMiRe14,HeKiRe15} for some recent contributions. 
Their importance was further emphasised by the recently observed \cite{QuVeBr14,UoMoGu14} tight connection to EPR-steering, which is currently under active investigation; see e.g. \cite{JoWiDo07,SkCa15,SkNaCa14,HaEbStSaFrWeSc12,PiWa15}.
We note that incompatibility should not be confused with the more restricted concept of noncommutativity, or the related concept of uncertainty principle. 
Both of these have more restricted meaning than incompatibility.

A delicate step in any quantum information protocol is the transmission of quantum systems, and their time evolution in a noisy environment. 
These processes typically induce decoherence on the quantum state, degrading its quality as a resource for some quantum protocol under consideration. 
Motivated by the steering protocols, we look at situations where the noise is local, and acts only on Alice's side. Such an effect is described by a quantum channel $\Lambda$ which in the Schr\"odinger picture maps an initial bipartite state $\varrho$ into the final state $(\Lambda_*\otimes {\rm id}) (\varrho)$. 
A particular instance is the white noise channel, which turns maximally entangled states into isotropic states.

In order to evaluate the performance of a protocol, it is essential to study the effect of noise channels on its resources. An important step in this direction is to characterize those channels which destroy a resource completely; in the case of entanglement, the relevant objects are entanglement breaking channels, i.e., channels $\Lambda$ such that $(\Lambda_*\otimes {\rm id}) (\varrho)$ is a separable state for every bipartite state $\varrho$. 
The structure of entanglement breaking channels is well known \cite{HoShRu03}, and the concept has also been generalised into various directions \cite{ChKo06,KoHoHo12}.

We now wish to change the viewpoint from entanglement breaking channels to something more appropriate for e.g. steering-based protocols for which entanglement is necessary but not sufficient. 
Owing to the duality relation
\begin{equation}\label{eqn:duality}
\tr{(\Lambda_*\otimes {\rm id}) (\varrho) A\otimes B} = \tr{\varrho  \Lambda(A)\otimes B},
\end{equation}
we can alternatively view the effect of the noise channel in the Heisenberg picture as a cause of disturbance on Alice's measurements. In other words, instead of seeing the effect of the noise as decoherence on the \emph{nonlocal resource} $\varrho$, we interpret it as a distortion on Alice's \emph{local measurement resource}, that is, incompatibility. Study of this resource can be done without any regard to the bipartite setting.

In this work we regard incompatibility as a general multi-purpose quantum resource that may be lost due to an action of a noisy quantum channel; the purpose of the paper is to initiate the study of the properties of channels relevant for this phenomenon. The starting point for our investigation is the observation that compatible measurements cannot become incompatible by the action of any channel, while the reverse happens typically. In this paper, we focus our attention to channels which completely destroy the incompatibility of various relevant sets of measurements. We call such channels \emph{incompatibility breaking}, in analogy to the concept of entanglement breaking channels. More generally, one can also consider channels that partially destroy incompatibility, as quantified e.g. by an incompatibility monotone introduced in \cite{HeKiRe15,BuHeScSt13}; this will be considered in a forthcoming paper. Furthermore, we restrict ourselves to the finite-dimensional setting, thereby excluding continuous variable Gaussian systems, which we also postpone to a separate paper.

We stress that the notion of incompatibility breaking channel is relevant for understanding noise-robustness of the restricted local measurement resources for EPR-steering. However, we also wish to make clear that the relevance of incompatibility goes beyond the steering context.

We show below that entanglement breaking channels are all incompatibility breaking in the strongest sense, i.e. they destroy incompatibility of every set of measurements. We then demonstrate examples of channels that destroy the incompatibility of any set of $n\geq 2$ observables but are not entanglement breaking. We also show that there exist channels that are incompatibility breaking in the strongest sense, but nevertheless not entanglement breaking. In this sense, entanglement is more robust to noise than incompatibility.

The paper is organized as follows.
In Sec. \ref{sec:steering} we explain the separation of local and nonlocal resources in a correlation experiment, and review the connection between incompatibility and steering. We then proceed to introduce the concepts relevant for incompatiblity breaking channels in Sec. \ref{sec:ibc}.
In Sec. \ref{sec:wn} we investigate the example of white noise mixing channels. Using the derived results we prove in Sec. \ref{sec:ebc} that every entanglement breaking channel is incompatibility breaking in the strongest sense, but the converse does not hold. Finally, the structural connections between different notions of incompatibility breaking channels are summarized in Sec. \ref{sec:summary}.

%%%%%%%%%%%%%%%%%%%%%%%
\section{Incompatibility as the local resource for steering}\label{sec:steering}
%%%%%%%%%%%%%%%%%%%%%%%

While the incompatibility of quantum devices is an interesting topic by itself, its particular role in the steering scenario has recently been investigated by many authors (see references cited in the introduction). For this reason we begin our present study with a short review of this connection, presented in a way that makes a clear separation between the nonlocal resource (bipartite entangled state) and the local resource (incompatible set of measurements). 
This separation of resources is not explicit in the notion of ``assemblage'' or ``ensemble'' of conditional states often used as a basic concept for steering (see e.g. \cite{SkNaCa14,WiJoDo07}). As a basic reference for steering we use the seminal paper \cite{WiJoDo07} by Wiseman et al.

%%%%%%%%%%%%%%%%%%%%%%%
\subsection{Incompatibility of quantum observables}\label{sec:joint}
%%%%%%%%%%%%%%%%%%%%%%%

The state dependence of measurement outcome distributions in a quantum measurement is described by the associated \emph{observable}, which is mathematically described as a \emph{positive operator valued measure} (POVM).
A POVM $\A$ with a finite outcome set $\Omega_\A$ is defined as a map $a\mapsto \A(a)$ that assigns a bounded operator to each outcome and satisfies
\begin{align*}
\sum_{a\in\Omega_\A} \A(a)=\id \quad \mathrm{and} \quad \A(a)\geq 0
\end{align*}
for all $a\in\Omega_\A$. Given a state $\varrho$ of the system, the probability to get a particular outcome $a\in \Omega_\A$ is then ${\rm tr}[\varrho \A(a)]$.
It is convenient to denote $\A(X) = \sum_{a\in X} \A(a)$ for any set $X\subseteq\Omega_\A$.
The labeling of outcomes is not relevant for the questions that we will investigate. 
For this reason, we may assume $\Omega_\A \subset \integer$ whenever this is convenient. 

A finite collection $\A_1,\ldots,\A_n$ is said to be \emph{compatible} (or jointly measurable) if there exists a \emph{joint observable} $\G$, which is an observable defined on the product outcome space $\Omega_{\A_1}\times\cdots\times\Omega_{\A_n}$ and satisfies the marginal conditions
\begin{align*}
& \G(X_1\times\Omega_{\A_2}\times\cdots\times\Omega_{\A_n}) =  \A_1(X_1)\\
& \qquad\vdots\qquad\qquad\qquad\vdots  \\
& \G(\Omega_{\A_1}\times\cdots\times\Omega_{\A_{n-1}} \times X_n) = \A_n(X_n)
\end{align*}
for all $X_1\subseteq\Omega_{\A_1},\ldots,X_n\subseteq\Omega_{\A_n}$. 
Here we have used the notation $\G(Y) = \sum_{g\in Y} \G(g)$, so e.g. the condition that the first equation is valid for all $X_1\subseteq\Omega_{\A_1}$ is equivalent to the requirement that
\begin{align*}
\sum_{a_2,\ldots,a_n} \G(a_1,a_2,\ldots,a_n) = \A_1(a_1)  
\end{align*}
holds for all $a_1\in\Omega_{\A_1}$.

This formulation of compatibility often appears in the literature; see e.g. \cite{Lahti03}. 
However, an equivalent formulation in terms of general postprocessing is often more convenient, and it also allows to formulate compatibility of an infinite number of observables. 
To properly formulate joint measurements for infinite number of observables, we first recall the general definition of a POVM.
A POVM $\G$ with infinite outcome set $\Omega$ must be understood literally as an operator-valued \emph{measure}, i.e., a map that associates a positive operator $\G(X)$ to each Borel subset $X\subset \Omega$ (an event),  has $\sum_i \G(X_i)=\G(\cup_i X_i)$ for any disjoint collection of sets ($\sigma$-additivity), and satisfies the normalisation $\G(\Omega)=\id$.

Now, in order to motivate the idea of postprocessing, suppose that we perform a measurement of an observable $\G$ and obtain an outcome $g\in \Omega$. We can make as many copies of this outcome as we want since this is just classical information.
After copying, we can process each copy of $g$ in an independent way. In particular, we can assign to it a new outcome $a$ with a conditional probability $f(a,g)$ of $a$ given $g$. 
These are normalised as $\sum_{a} f(a,g)=1$,  and we can think of the new outcomes arising from a usual finite-outcome observable $\A$, whose elements are thus defined by
\begin{equation}\label{postprocessing}
\A(a) = \int_{\Omega_{\G}} f(a,g)\G(dg), \text{ for all } a\in \Omega_\A.
\end{equation}
In general, an arbitrary collection $\mathcal M$ of finite-outcome observables is said to be \emph{compatible} (or jointly measurable) if there exists an observable $\G$ such that every $\A\in \mathcal M$ can be obtained from $\G$ by a postprocessing of the form \eqref{postprocessing}. If a collection $\mathcal M$ is not compatible, it is said to be \emph{incompatible}. 
It can be shown \cite{AlCaHeTo09} that in the case of a finite collection $\mathcal M$, this formulation of compatibility is equivalent to the usual one given above.

As an example of an infinite set of compatible observables, we recall the joint measurement of noisy spin-1/2 observables \cite{MLQT12}.
For a spin-$1/2$ quantum system, the measurement of the spin component in the direction $\vec{n}\in\real $ is described by the two-outcome observable
$$
\mathsf{S}^{\vec{n}} (\pm) = \frac{1}{2}( \id  \pm \vec{n}\cdot\vec{\sigma}).
$$
By adding uniform noise to this observable, we obtain the corresponding noisy versions
$$
\mathsf{S}^{\vec{n}}_t (\pm) =t\mathsf{S}^{\vec n} (\pm)  + \frac{1-t}{2}\id =  \frac{1}{2}( \id  \pm t\vec{n}\cdot\vec{\sigma}).
$$
The {\em spin direction observable} $\mathsf{D}$ with outcomes on the surface of the unit sphere $\mathbb{S}^2$  in $\real^3$ is defined as
$$
\mathsf{D} (X) = \frac{1}{4\pi} \int_X (\id + \vec{k}\cdot\vec{\sigma})\, {\rm d}\vec{k}.
$$ 
For any direction $\vec{n}$, we define the postprocessing function $f_{\vec{n}}$ by 
$$
f_{\vec{n}} (\pm, \vec{k})= \left\{\begin{array}{lll} 1, & \text{ if } \pm \vec{k}\cdot \vec{n} > 0, \\
0, & \text{ otherwise}. \end{array}\right.
$$
A direct calculation then shows that 
$$
\int f_{\vec{n}} (\pm, \vec{k})\, {\rm d}\mathsf{D} (\vec{k}) = \frac{1}{4\pi} \int f_{\vec{n}} (\pm, \vec{k}) (\id + \vec{k}\cdot\vec{\sigma})\, {\rm d}\vec{k} = \mathsf{S}^{\vec{n}}_{1/2} (\pm).
$$
In conclusion, the infinite set of noisy spin observabes 
$$
\{\mathsf{S}^{\vec{n}}_{1/2} : \vec{n}\in\real^3, \Vert \vec{n}\Vert =1\}
$$ 
is compatible.
It has been also shown that a corresponding set with $t>\frac{1}{2}$ is incompatible \cite{UoMoGu14}.

%%%%%%%%%%
\subsection{Separation of local and nonlocal resources in the steering setup}\label{sec:separation}
%%%%%%%%%%

The operational starting point for our following discussion is that of a \emph{correlation experiment} consisting of two parties, Alice and Bob, choosing observables from sets $\mathcal{M}_A$ and $\mathcal{M}_B$, respectively; see Fig. \ref{fig:ab}.
\begin{figure}
\centering
 \includegraphics[width=10cm]{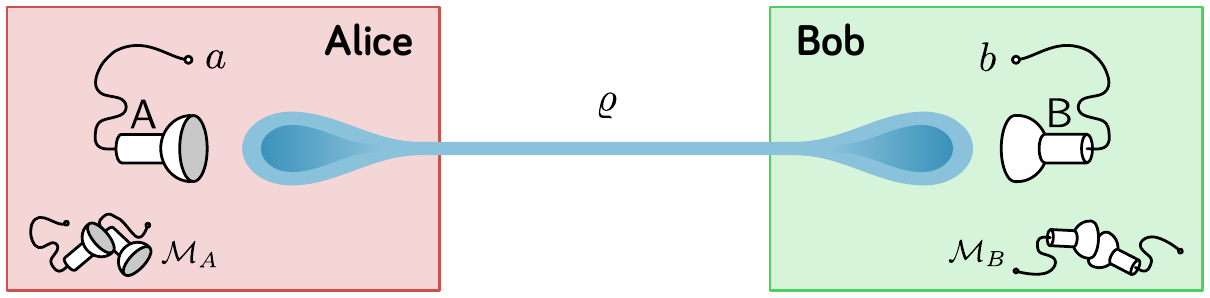}
           \caption{A correlation experiment consists of two parties, Alice and Bob, who choose observables from sets $\mathcal{M}_A$ and $\mathcal{M}_B$, respectively. The experiment contains both non-local resources (state $\varrho$) and local resources (sets $\mathcal{M}_A$ and $\mathcal{M}_B$). In the one-way steering scenario (from Alice to Bob), the relevant local resource is Alice's set $\mathcal M_A$, which  needs to be incompatible.}     \label{fig:ab}
\end{figure}
All observables $\A\in\mathcal{M}_A$ and $\B\in\mathcal{M}_B$ are assumed to have finite outcome sets $\Omega_\A$ and $\Omega_\B$. 
The experiment is described by a \emph{correlation table}
\begin{equation*}
\mathbb P(a,b|\A,\B),\quad \A\in \mathcal M_A, a\in \Omega_\A \, , \B\in \mathcal M_B \, , b\in \Omega_\B \, , 
\end{equation*}
consisting of conditional probabilities. 
The correlation table is said to have a \emph{local classical model}, if there exists a probability space $(\Omega,p)$, and response functions $f_\A(a,g)$ and $h_\B(b,g)$ such that $\sum_{a}f_\A(a,g)=\sum_b h_\B(b,g)=1$ for all $g$, and
\begin{equation}
\mathbb P(a,b|\A,\B)=\int_{\Omega} f_\A(a,g)h_\B(b,g) p(dg).
\end{equation}
If such a model does not exist, it is customary to say that the correlations are \emph{nonlocal} or \emph{nonclassical}. 

Assuming that the underlying system is quantum, i.e., $\A$ and $\B$ are POVMs and there is bipartite state $\varrho$ such that 
\begin{equation}
\mathbb P(a,b|\A,\B)={\rm tr}[\varrho(\A(a)\otimes\B(b))] \, , 
\end{equation}
 we then define for each $\A(a)$ the corresponding conditional states 
 \begin{equation}
 \sigma_{a,\A}={\rm tr}_A[\varrho(\A(a)\otimes \id)]
 \end{equation}
  on Bob's side. %;  see Fig. \ref{fig:ab}b.
This family satisfies the \emph{no-signaling conditions}
\begin{align}
\sigma_B:=\sum_a \sigma_{a,\A}&=\sum_a \sigma_{a,\A'}, \text{ for all }\A,\A',
\end{align} 
and the normalisation ${\rm tr}[\sigma_B]=1$. 
The family $\{ \sigma_{a,\A}\}$ is often called an \emph{assemblage} or \emph{ensemble} in the context of steering.

A given assemblage is said to be \emph{non-steerable} (from Alice to Bob) if it has a local classical model, where Bob's response functions are of particular form, namely $h_\B(b,g)={\rm tr}[\varrho_g \B(b)]$ for some (measurable) family of \emph{hidden states} $\{\varrho_g\}_{g\in \Omega}$. 
Since we assume Bob to have full set of measurements, this is equivalent to the decomposition
\begin{equation}
\sigma_{a,\A}=\int_\Omega f_\A(a,g)\varrho_g \ p(dg) \, , 
\end{equation}
meaning that the assemblage can be classically simulated using the hidden states $\varrho_g$. When this is not possible, we say that the assemblage is \emph{steerable} from Alice to Bob, or that Alice can steer Bob's state.

The problem with the ``assemblage'' notion is that it mixes the local resource $\mathcal M_A$ and the shared resource $\varrho$. 
We now separate them by writing
\begin{equation}
\sigma_{a,\A}=S(\A(a)) \, , 
\end{equation}
where $S$ is the map from Alice's observable algebra to the set of subnormalised states on Bob's side, defined by 
\begin{equation}
S(A) = {\rm tr}_A[\varrho(A\otimes \id)] \, .
\end{equation}
This maps positive operators to positive operators, and satisfies the normalisation condition $S(\id)=\varrho_B$. If $S$ has a positive inverse, we can define a unique POVM $\G$ on Alice's side, for which 
\begin{equation}
S(\G(X)):=\int_X\varrho_g p(dg) \, .
\end{equation}
 Then
\begin{equation}
\A(a) =\int_\Omega f_{\A}(a,g) \G(dg) \, , 
\end{equation}
for all $\A\in \mathcal M_A$.
This means exactly that the set $\mathcal M_A$ is compatible according to the definition we gave in Subsec. \ref{sec:joint}. 
Conversely, if the measurements $\mathcal M_A$ are compatible, one can always decompose a joint POVM $\G$ into hidden states, showing that the assemblage $\{S(\A(a))\}$ is steerable. 
A typical case where $S$ has a positive inverse is when $\varrho$ is pure and has full Schmidt rank; since the topic of the present paper is Alice's local resource, we do not study the properties of $S$ in more detail here.

It is important to note that if $S$ does not have a positive inverse, the above connection between incompatibility and steering does not hold. A typical way of ending up with such a situation is to have local noise, as discussed in the introduction: suppose we begin with an $S_0$ coming from a state with full Schmidt rank, and apply a noise channel $\Lambda$ on Alice's side. If we consider the noise acting in the Schr\"odinger picture on the ``nonlocal'' state resource, we adjust the map $S$:
$$
S(A)=S_0\circ \Lambda \, .
$$
However, since the noise is local, it is more appropriate to consider it as acting on the local resource and hence in the Heisenberg picture.
This gives the following ``noisy version'' of the connection between incompatibility and steering: 

\begin{proposition} Suppose Alice and Bob share a bipartite pure state $\varrho_0$ of full Schmidt rank. 
\begin{itemize}
\item[(a)] (Ideal scenario) Alice can steer Bob with a set of measurements $\mathcal M_A$, if and only if $\mathcal M_A$ is incompatible.
\item[(b)] (Noisy scenario). Suppose that Alice has an incompatible set $\mathcal M_A$ of measurements, but subjected to a local noise channel $\Lambda$ on her system. Then Alice can steer Bob if and only if $\Lambda$ does not map $\mathcal M_A$ into a compatible set of observables.
\end{itemize}
\end{proposition}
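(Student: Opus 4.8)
The plan is to establish (a) by filling in the measure-theoretic details already sketched in the discussion preceding the proposition, and then to deduce (b) from (a) applied to a post-processed set of observables.

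\emph{Part (a).} The first step is to confirm that $S_0(A) = \mathrm{tr}_A[\varrho_0(A\otimes\id)]$ has a positive inverse. Writing $\varrho_0 = \kb{\psi}{\psi}$ with Schmidt decomposition $\ket{\psi} = \sum_i\sqrt{\lambda_i}\,\ket{i}\otimes\ket{i}$, all $\lambda_i>0$, a short computation gives $S_0(A) = \sqrt{\varrho_B}\,A^{T}\sqrt{\varrho_B}$, where $\varrho_B = \sum_i\lambda_i\kb{i}{i} = S_0(\id)$ has full rank and $A^{T}$ is the transpose in the Schmidt basis; hence $S_0$ is a bijection on self-adjoint operators with positive inverse $\sigma\mapsto(\varrho_B^{-1/2}\sigma\,\varrho_B^{-1/2})^{T}$. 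Next come the two implications. If $\mathcal M_A$ is compatible with joint observable $\G$, put $p(X) = \mathrm{tr}[S_0(\G(X))]$; this is a probability measure, and $X\mapsto S_0(\G(X))$ is absolutely continuous with respect to it, with a density $\varrho_g$ that is positive and of trace one for $p$-almost every $g$. Then $\sigma_{a,\A} = S_0(\A(a)) = \int f_\A(a,g)\,\varrho_g\,p(dg)$, so the assemblage is non-steerable. Conversely, if the assemblage is non-steerable, $\sigma_{a,\A} = \int f_\A(a,g)\,\varrho_g\,p(dg)$; since $\sum_a f_\A(a,g)=1$ we get $\int\varrho_g\,p(dg) = \varrho_B$, so $\G(X) := S_0^{-1}\!\left(\int_X\varrho_g\,p(dg)\right)$ is positive, $\sigma$-additive and satisfies $\G(\Omega)=S_0^{-1}(\varrho_B)=\id$, i.e.\ is a POVM; injectivity of $S_0$ then yields $\A(a) = \int f_\A(a,g)\,\G(dg)$ for all $\A\in\mathcal M_A$, so $\mathcal M_A$ is compatible. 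Thus the assemblage is steerable if and only if $\mathcal M_A$ is incompatible.

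\emph{Part (b).} The key observation is that a channel in the Heisenberg picture is unital and positive, so for each $\A\in\mathcal M_A$ the map $\Lambda\circ\A\colon a\mapsto\Lambda(\A(a))$ is again a POVM on $\Omega_\A$, since $\Lambda(\A(a))\geq 0$ and $\sum_a\Lambda(\A(a))=\Lambda(\id)=\id$. In the noisy scenario the conditional states on Bob's side are $\sigma_{a,\A} = (S_0\circ\Lambda)(\A(a)) = S_0\big((\Lambda\circ\A)(a)\big)$, i.e.\ exactly the assemblage of part (a) built from the same pure state $\varrho_0$ and the modified set $\mathcal M_A' := \{\Lambda\circ\A : \A\in\mathcal M_A\}$. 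Applying (a) to $\mathcal M_A'$ shows that Alice can steer Bob if and only if $\mathcal M_A'$ is incompatible, which is precisely the condition that $\Lambda$ does not map $\mathcal M_A$ into a compatible set.

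The only delicate points — more bookkeeping than genuine obstacles — lie in part (a): producing the operator-valued Radon--Nikodym density $\varrho_g$ with honest positivity and trace-one normalization off a $p$-null set, and checking that $S_0$, $S_0^{-1}$ and $\Lambda$ may be interchanged with the post-processing integrals (automatic in finite dimension by linearity and boundedness) and that $S_0^{-1}$ sends a subnormalized-state-valued measure to a bona fide POVM. Once these are in place, part (b) is immediate, being literally part (a) for the observables $\Lambda\circ\A$.
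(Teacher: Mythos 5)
Your proposal is correct and follows essentially the same route as the paper: the paper's argument is exactly the preceding discussion in Sec.~2.2 (the map $S$, its positive inverse for a pure full-Schmidt-rank state, and treating the noise in the Heisenberg picture so that (b) is (a) applied to $\{\Lambda\circ\A\}$), which the paper deliberately leaves as a sketch. You merely supply the details the authors omit, such as the explicit form $S_0(A)=\sqrt{\varrho_B}\,A^{T}\sqrt{\varrho_B}$ and the operator-valued Radon--Nikodym step, and these are correct.
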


In conclusion, noisy local resources for EPR-steering essentially have two requirements, incompatible set $\mathcal M_A$ of measurements, and noise channels that do not \emph{break the incompatibility} of that set. The rest of the paper is focused on properties of such channels.

%%%%%%%%%%%%%%%%%%%%%%%%%%
\section{Incompatibility breaking channels}\label{sec:ibc}
%%%%%%%%%%%%%%%%%%%%%%%%%%

%%%%%%%%%%%%%%%%%%%%%%%%%%
\subsection{$n$-incompatibility breaking channels}
%%%%%%%%%%%%%%%%%%%%%%%%%%

Quantum evolution is generally described by a \emph{quantum channel}, which is a unital completely positive linear map $\Lambda $ on the observable algebra $\lh$ (the set of bounded operators on the system Hilbert space $\mathcal H$). Quantum channels describe operations that can be performed on the system either actively (by e.g. unitary quantum gates) or passively (by environmental interaction).  We denote by $\ch$ the set of all quantum channels on $\lh$  and use the notation $\Lambda_1 \circ \Lambda_2$ for the functional composition of two channels $\Lambda_1,\Lambda_2\in\ch$, i.e., 
\begin{align*}
(\Lambda_1 \circ \Lambda_2)(T) = \Lambda_1( \Lambda_2(T)) \, .
\end{align*} 
This is called \emph{concatenation} of $\Lambda_2$ and $\Lambda_1$.

A quantum channel $\Lambda$ maps  any observable $\A$ into another observable $\Lambda(\A)$ by way of composition:
\begin{align*}
\Lambda(\A)&:=\Lambda\circ \A \, , & (\Lambda\circ \A)(x)&:=\Lambda(\A(x)) \, .
\end{align*}
We immediately notice that complete positivity is not necessary for this to be a valid transformation of observables, but the same is true for any unital positive linear map. The meaning of the following simple observation is that such maps cannot create incompatibility \cite{HeKiRe15}. (Implications of this result to steering have been noticed earlier; see \cite{QuEtal15} and the references therein.)

\begin{proposition}\label{prop:unital_positive}
Let $\Lambda$ be a unital and positive linear map on $\lh$.
If the observables $\A_1,\ldots,\A_n$  are compatible, then also the transformed observables $\Lambda(\A_1), \ldots, \Lambda(\A_n)$ are compatible.
\end{proposition}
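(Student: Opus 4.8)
The plan is to exhibit an explicit joint observable for the transformed collection, built directly out of a joint observable for the original collection. Suppose $\A_1,\dots,\A_n$ are compatible, with joint observable $\G$ defined on the product outcome space $\Omega_{\A_1}\times\cdots\times\Omega_{\A_n}$ and satisfying the marginal conditions stated in Subsec.~\ref{sec:joint}. I would simply define $\G' := \Lambda\circ\G$, that is, $\G'(g) = \Lambda(\G(g))$ for each $g$ in the product outcome space. The first step is to check that $\G'$ is a legitimate POVM: positivity of each $\G'(g)$ follows because $\Lambda$ is positive and $\G(g)\geq 0$, and the normalisation $\sum_g \G'(g) = \Lambda\bigl(\sum_g \G(g)\bigr) = \Lambda(\id) = \id$ uses linearity and unitality of $\Lambda$. (In the general, possibly infinite-outcome formulation this step would also require checking $\sigma$-additivity, which again is inherited from $\G$ by linearity/normality of $\Lambda$, but for the finite collection $\A_1,\dots,\A_n$ the outcome space is finite and this is automatic.)

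The second step is to verify the marginal conditions for $\G'$. For the first marginal, one computes, for $X_1\subseteq\Omega_{\A_1}$,
\begin{align*}
\G'(X_1\times\Omega_{\A_2}\times\cdots\times\Omega_{\A_n})
&= \sum_{\substack{a_1\in X_1\\ a_2,\dots,a_n}} \Lambda(\G(a_1,\dots,a_n))\\
&= \Lambda\Bigl( \sum_{\substack{a_1\in X_1\\ a_2,\dots,a_n}} \G(a_1,\dots,a_n)\Bigr)
= \Lambda(\A_1(X_1)) = \Lambda(\A_1)(X_1),
\end{align*}
where the middle equality is linearity of $\Lambda$ and the next-to-last is the marginal condition for $\G$. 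The same argument applies verbatim to each of the remaining marginals, so $\G'$ is a joint observable for $\Lambda(\A_1),\dots,\Lambda(\A_n)$, which is exactly the assertion that these transformed observables are compatible.

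There is no real obstacle here; the statement is essentially a bookkeeping exercise, and the only thing to be slightly careful about is that the argument uses only linearity, positivity and unitality of $\Lambda$ — complete positivity plays no role — which matches the remark preceding the proposition. If one wished to phrase the proof in the postprocessing language of \eqref{postprocessing} instead (relevant for infinite collections), one would take the postprocessing kernels $f_{\A_i}$ witnessing that $\A_i$ comes from $\G$, and observe that the \emph{same} kernels witness that $\Lambda(\A_i)$ comes from $\G' = \Lambda\circ\G$, since $\Lambda$ commutes with the integral $\int f(a,g)\,\G(dg)$ by linearity and normality. Either way, the content is that channels (indeed unital positive maps) act on joint observables by post-composition, and this post-composition preserves all the defining properties of a joint observable.
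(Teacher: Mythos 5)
Your proof is correct and follows exactly the paper's own argument: the paper also takes a joint observable $\G$ for $\A_1,\ldots,\A_n$ and observes that $\Lambda(\G)=\Lambda\circ\G$ is a joint observable for $\Lambda(\A_1),\ldots,\Lambda(\A_n)$, merely leaving implicit the positivity, normalisation, and marginal checks that you spell out. No gap; your version is just a more detailed write-up of the same one-line proof.
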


\begin{proof}
Assuming $\{\A_1,\ldots,\A_n\}$ is compatible, there exists a joint observable $\G$ for $\A_1,\ldots,\A_n$.  
From the definition it immediately follows that $\Lambda(\G)$ is a joint observable  for  $\Lambda(\A_1),\ldots,\Lambda(\A_n)$.
\end{proof}

A unitary channel $\sigma_U(T):=UTU^*$ is reversible, and it follows that observables $\A_1,\ldots,\A_n$ are compatible if and only if $\sigma_U(\A_1),\ldots,\sigma_U(\A_n)$ are compatible.
In other words, unitary channels preserve incompatibility. 
As an extreme example of the opposite kind, the completely depolarising channel $T \mapsto \tr{\eta T}\id$ maps every operator to a multiple of the identity operator.
Hence, the image of any collection of observables is compatible.
A typical channel is something between these two extremes, and we thus expect that it maps some subsets of observables into compatible ones, but not all. 

\begin{definition}
Let $\Lambda$ be a channel and $\mathcal A$ a subset of observables. If the image $\Lambda(\h A)$ is compatible, we say that $\Lambda$ {\em breaks the incompatibility of $\h A$}. 
\end{definition}

As an example, let us consider quantum channels $\Gamma_{t,\Theta,\eta}:\lh\to\lh$ of the form
\begin{align}
\Gamma_{t,\Theta,\eta} (T) = t \Theta(T) + (1-t) \tr{\eta T}\id \, , 
\end{align}
where  $\eta$ is a fixed  state, $\Theta$ is a fixed channel, and $0\leq t \leq 1$. The channel $\Gamma_{t,\Theta,\eta}$ is thus a mixture of $\Theta$ and the completely depolarising channel $T \mapsto \tr{\eta T} \id$, and one can  regard $\Gamma_{t,\Theta,\eta}$ as a noisy version of $\Theta$.
Since the completely depolarising channel breaks the incompatibility of any collection of observables, we expect that the same is true for any $\Gamma_{t,\Theta,\eta}$ with $t\leq t_0$ for some critical value $t_0$.
In the following we give a bound that is independent of $\Theta$.

\begin{proposition} \label{prop:nIBC-noisy}
Let $n\geq 2$ be an integer.  
For all $0\leq t \leq \frac{1}{n}$, the quantum channel $\Gamma_{t,\Theta,\eta}$ breaks the incompatibility of an arbitrary set $\{\A_1,\ldots, \A_n\}$ of $n$ observables.
\end{proposition}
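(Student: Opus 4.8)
The plan is to prove compatibility of the image POVMs $\B_i := \Gamma_{t,\Theta,\eta}(\A_i)$, $i=1,\dots,n$, by directly writing down a joint observable on the product outcome set $\Omega_{\A_1}\times\cdots\times\Omega_{\A_n}$. Introduce the probability distributions $p_i(a):=\tr{\eta\,\A_i(a)}$ induced by the fixed state $\eta$ on the outcomes of $\A_i$, so that
\[
\B_i(a) = t\,\Theta(\A_i(a)) + (1-t)\,p_i(a)\,\id , \qquad a\in\Omega_{\A_i}.
\]
The guiding idea is that each ``signal part'' $\Theta(\A_i(a))$ is a positive operator dominated by $\id$, and it only needs to be reproduced with weight $t\le \tfrac1n$; hence all $n$ signal parts can be packed into a single POVM, the off-diagonal slots being filled with the product noise distribution.

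Concretely, I would define
\[
\G(a_1,\dots,a_n) := \sum_{j=1}^n t\,\Theta(\A_j(a_j))\!\!\prod_{k\neq j} p_k(a_k) \;+\; (1-nt)\prod_{k=1}^n p_k(a_k)\,\id ,
\]
and verify the three defining properties of a joint observable. Positivity of $\G(a_1,\dots,a_n)$ is clear: each $t\,\Theta(\A_j(a_j))\ge 0$, the scalar coefficients $\prod_{k\ne j}p_k(a_k)$ and $\prod_k p_k(a_k)$ are nonnegative, and the coefficient $1-nt$ of the last (white-noise) term is nonnegative \emph{precisely because} $t\le\tfrac1n$ --- this is the only point at which the hypothesis on $t$ is used. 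Normalisation $\sum_{a_1,\dots,a_n}\G(a_1,\dots,a_n)=\id$ follows from $\Theta(\id)=\id$ and $\sum_a p_i(a)=1$, which collapse the first sum to $nt\,\id$ and the last term to $(1-nt)\,\id$.

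It remains to check the marginals. Fixing $j$ and summing over all $a_i$ with $i\ne j$: in the first sum the diagonal term $m=j$ yields $t\,\Theta(\A_j(a_j))$ (the factors $p_k$, $k\ne j$, summing to $1$); each of the $n-1$ off-diagonal terms $m\ne j$ collapses via $\sum_{a_m}\Theta(\A_m(a_m))=\id$ to $t\,p_j(a_j)\,\id$; and the white-noise term contributes $(1-nt)\,p_j(a_j)\,\id$. Adding these gives $t\,\Theta(\A_j(a_j))+\big[(n-1)t+(1-nt)\big]p_j(a_j)\,\id = t\,\Theta(\A_j(a_j))+(1-t)p_j(a_j)\,\id=\B_j(a_j)$, as required. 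Thus $\G$ is a joint observable for $\B_1,\dots,\B_n$, i.e. $\Gamma_{t,\Theta,\eta}$ breaks the incompatibility of $\{\A_1,\dots,\A_n\}$.

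There is no deep obstacle here: once the noisy observables are written in the displayed form, the ansatz for $\G$ is essentially forced, and the bound $t\le\tfrac1n$ is exactly what keeps the leftover weight $1-nt$ nonnegative. The only mildly fiddly part is the marginal bookkeeping --- tracking which factors $p_k(a_k)$ are summed to $1$ and which survive in each of the $n$ terms --- which is cleanest if the diagonal term $m=j$ is handled separately from the off-diagonal ones. Note also that the construction only sees $\Theta$ through the positive operators $\Theta(\A_j(a_j))\le\id$, which makes the $\Theta$-independence of the bound manifest, as claimed in the statement.
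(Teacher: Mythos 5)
Your proposal is correct and takes essentially the same route as the paper: the paper's proof defines exactly the same joint observable $\G(a_1,\ldots,a_n)=\sum_j t\,\Theta(\A_j(a_j))\prod_{k\neq j}\tr{\eta\A_k(a_k)}+(1-nt)\prod_k\tr{\eta\A_k(a_k)}\,\id$ and verifies positivity and the marginals. Your write-up merely spells out the marginal bookkeeping and the role of $t\le 1/n$ in more detail than the paper does.
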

\begin{proof}
For a collection of $n$ observables $\A_1,\ldots,\A_n$, we define a map $\G$  by
\begin{align*}
\G(a_1,\ldots, a_n) & = t \Theta(\A_1(a_1)) \cdot\Pi_{j\neq 1} \tr{\eta \A_j(a_j)}  \\
& + \cdots  \\
& + t \Theta(\A_n(a_n)) \cdot\Pi_{j\neq n} \tr{\eta \A_j(a_j)} \\
& + (1-tn) \cdot \Pi_{j} \tr{\eta \A_j(a_j)} \, .
\end{align*}
Then $\G(a_1,\ldots, a_n)\geq 0$ and
\begin{align*}
\G(X_1\times\Omega_{\A_2}\times\ldots\times\Omega_{\A_n}) = \Gamma_{t,\Theta,\eta} (\A_1(X_1)) 
\end{align*}
and similarly for the other marginals. 
Hence, $\G$ is a joint observable for the observables $\Gamma_{t,\Theta,\eta}(\A_1), \ldots, \Gamma_{t,\Theta,\eta}(\A_n)$.
\end{proof}

Motivated by this example we now introduce the following concept. 

\begin{definition} 
Let $\Lambda$ be a quantum channel on $\lh$. If $\Lambda$ breaks the incompatibility of every collection of $n$ observables, then $\Lambda$ is called \emph{$n$-incompatibility breaking}. We let $\nibc$ denote the set of all $n$-incompatibility breaking channels.
\end{definition}

The following basic properties of the sets $\nibc$ are immediate.
\begin{proposition}\label{prop:basic_nibc}
Each $\nibc$ is a convex subset of $\ch$. With respect to the channel concatenation relation, $\nibc$ is an ideal in $\ch$, that is, if $\Lambda\in\ch$ and $\Lambda'\in\nibc$, then $\Lambda \circ \Lambda',\,\Lambda' \circ \Lambda\in\nibc$. The following inclusions hold:
\begin{equation}\label{basic_inclusions}
\nibc \subseteq \cdots \subseteq 3-\ibc \subseteq 2-\ibc \, .
\end{equation}
\end{proposition}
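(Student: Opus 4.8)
The plan is to verify the three claims in Proposition~\ref{prop:basic_nibc} separately, each by a short direct argument.

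First, for \textbf{convexity}: suppose $\Lambda_0,\Lambda_1\in\nibc$ and $0\le\lambda\le1$, and let $\Lambda=\lambda\Lambda_0+(1-\lambda)\Lambda_1$. Given any collection $\A_1,\ldots,\A_n$ of observables, each $\Lambda_i$ breaks their incompatibility, so there is a joint observable $\G_i$ for $\Lambda_i(\A_1),\ldots,\Lambda_i(\A_n)$. I would then check that $\G:=\lambda\G_0+(1-\lambda)\G_1$ is a joint observable for $\Lambda(\A_1),\ldots,\Lambda(\A_n)$: it is positive and normalised as a convex combination of POVMs, and its $k$-th marginal is $\lambda\Lambda_0(\A_k)+(1-\lambda)\Lambda_1(\A_k)=\Lambda(\A_k)$ by linearity of taking marginals. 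Hence $\Lambda\in\nibc$.

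Second, for the \textbf{ideal property}: let $\Lambda'\in\nibc$ and $\Lambda\in\ch$. For $\Lambda'\circ\Lambda$, note that $(\Lambda'\circ\Lambda)(\A_k)=\Lambda'(\Lambda(\A_k))$, i.e. the image collection is $\Lambda'$ applied to the observables $\Lambda(\A_1),\ldots,\Lambda(\A_n)$; since $\Lambda'$ breaks the incompatibility of \emph{every} $n$-tuple, in particular of this one, we get $\Lambda'\circ\Lambda\in\nibc$. For $\Lambda\circ\Lambda'$, start from a joint observable $\G$ for $\Lambda'(\A_1),\ldots,\Lambda'(\A_n)$ (which exists since $\Lambda'\in\nibc$), and observe that $\Lambda\circ\Lambda'$ is a unital positive (indeed completely positive) map; by the argument already used in the proof of Proposition~\ref{prop:unital_positive}, $\Lambda(\G)=\Lambda\circ\G$ is then a joint observable for $\Lambda((\Lambda'(\A_1))),\ldots$, that is, for $(\Lambda\circ\Lambda')(\A_1),\ldots,(\Lambda\circ\Lambda')(\A_n)$. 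Hence $\Lambda\circ\Lambda'\in\nibc$. This is essentially the only place where one invokes an earlier result, and it is routine.

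Third, for the \textbf{inclusion chain} $\nibc\subseteq\cdots\subseteq 2\text{-}\ibc$: it suffices to show $m\text{-}\ibc\subseteq n\text{-}\ibc$ whenever $m\ge n\ge2$. Given $\Lambda\in m\text{-}\ibc$ and an arbitrary $n$-tuple $\A_1,\ldots,\A_n$, pad it to an $m$-tuple by repeating the last observable (or, more cleanly, by adjoining $m-n$ copies of the trivial observable $\X$ with a single outcome, $\X(\ast)=\id$). Since $\Lambda\in m\text{-}\ibc$, the padded family $\Lambda(\A_1),\ldots,\Lambda(\A_n),\Lambda(\X),\ldots$ has a joint observable $\G$ on the product outcome space; marginalising $\G$ over the $m-n$ trivial factors yields a joint observable for $\Lambda(\A_1),\ldots,\Lambda(\A_n)$, so $\Lambda\in n\text{-}\ibc$. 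I expect no real obstacle here; the only point requiring a line of care is to confirm that a trivial (single-outcome) observable is genuinely an observable under the stated definition and that marginalising a joint POVM over some factors returns a joint POVM for the remaining marginals, which is immediate from $\sigma$-additivity and the marginal conditions spelled out in Subsec.~\ref{sec:joint}.
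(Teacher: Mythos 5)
Your proof is correct and follows essentially the same route as the paper: the convexity argument via mixing joint observables is identical, the ideal property is exactly the paper's appeal to the definition plus Proposition~\ref{prop:unital_positive}, and the inclusion chain is the padding/marginalisation argument the paper dismisses as obvious. You merely spell out the details the paper leaves implicit, which is fine.
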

\begin{proof} In order to show convexity, we take $\Lambda,\Lambda'\in\nibc$ and $0\leq t \leq 1$, and $n$ observables $\A_1,\ldots,\A_n$.
Assuming that $\Lambda,\Lambda'\in\nibc$, the sets $\{ \Lambda(\A_1),\ldots,\Lambda(\A_n)\}$ and $\{ \Lambda'(\A_1),\ldots,\Lambda'(\A_n)\}$ have joint observables $\G$ and $\G'$, respectively.
The mixture $t\G + (1-t)\G'$ is then a joint observable for the observables $(t\Lambda + (1-t)\Lambda')(\A_1),\ldots,(t\Lambda + (1-t)\Lambda')(\A_n)$. Hence $t\Lambda + (1-t)\Lambda' \in\nibc$, i.e. $\nibc$ is convex. The fact that $\nibc$ is an ideal follows immediately from the definition, together with Prop. \ref{prop:unital_positive}. The inclusions are obvious.
\end{proof}
We will show in Sec. \ref{sec:different} that every subset $\nibc$ is strictly containing some higher subset $\mibc$, at least if the dimension of the Hilbert space is large enough.

%%%%%%%%%%%%%%%%%%%%%%%
\subsection{Incompatibility breaking channels and complete incompatibility}
%%%%%%%%%%%%%%%%%%%%%%%

We now proceed to introduce the key concept of the paper in its general form.

\begin{definition}
Let $\Lambda$ be a quantum channel on $\lh$. If $\Lambda$ breaks the incompatibility of the set of all observables, then $\Lambda$ is called \emph{incompatibility breaking}. We denote by $\ibc$ the set of all such channels.
\end{definition}

Since the above definition requires that $\Lambda$ maps all observables into a compatible set, the task of determining if a given channel is incompatibility breaking appears tedious. In view of this, it would be desirable to have some smaller ``test sets'' of observables. This motivates the next definition.
\begin{definition} 
A set $\mathcal A$ of observables is said to have \emph{complete incompatibility}, if any channel that breaks its incompatibility, is necessarily in $\ibc$.
\end{definition}

We remark that any set having complete incompatibility is an optimal resource for noisy EPR-steering, assuming that the form of the noise is unknown. 
In order to demonstrate basic examples of sets having complete incompatibility, we make the following observation.
\begin{proposition}\label{prop:convex}
If $\Lambda$ breaks the incompatibility of a set of observables $\mathcal A$, then it also breaks the incompatibility of the convex hull of the set of all postprocessings of elements of $\mathcal A$ (into other finite-outcome POVMs).
\end{proposition}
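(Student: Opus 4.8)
The plan is to show that breaking incompatibility is preserved under two operations applied to the test set $\mathcal A$: (i) adjoining postprocessings of its elements, and (ii) taking convex combinations of (already postprocessed) observables. Since both operations, iterated, generate precisely the convex hull of the set of all postprocessings of elements of $\mathcal A$, establishing each of the two closure properties suffices.

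For step (i), suppose $\Lambda$ breaks the incompatibility of $\mathcal A$, so that $\{\Lambda(\A) : \A \in \mathcal A\}$ is compatible with a joint observable $\G$; that is, every $\Lambda(\A)$ arises from $\G$ by a postprocessing $\Lambda(\A)(a) = \int f_{\A}(a,g)\,\G(dg)$. Now let $\B$ be any postprocessing of some $\A \in \mathcal A$, say $\B(b) = \int h(b,a)\,\A(da)$ with $\sum_b h(b,a) = 1$. Because $\Lambda$ is linear and $\sigma$-additive (continuous) on the relevant operators, $\Lambda$ commutes with this postprocessing integral, so $\Lambda(\B)(b) = \int h(b,a)\,\Lambda(\A)(da) = \int \big(\sum_a h(b,a) f_{\A}(a,g)\big)\,\G(dg)$. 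Hence $\Lambda(\B)$ is itself a postprocessing of the \emph{same} $\G$, with response function $g \mapsto \sum_a h(b,a) f_{\A}(a,g)$, which is readily checked to be normalised in $b$. Thus $\G$ remains a joint observable for the enlarged family, and $\Lambda$ breaks the incompatibility of $\mathcal A$ together with all postprocessings of its elements.

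For step (ii), let $\mathcal A'$ denote the set of all postprocessings of elements of $\mathcal A$; by step (i), $\{\Lambda(\B) : \B \in \mathcal A'\}$ has a joint observable, call it $\G'$, and again each $\Lambda(\B)$ is a postprocessing of $\G'$ via some $f_{\B}$. If $\mathsf C = \sum_{i} \lambda_i \B_i$ is a finite convex combination of elements $\B_i \in \mathcal A'$ (with matching outcome spaces, as convex combinations of POVMs require), then by linearity $\Lambda(\mathsf C)(c) = \sum_i \lambda_i \Lambda(\B_i)(c) = \int \big(\sum_i \lambda_i f_{\B_i}(c,g)\big)\,\G'(dg)$, and $g \mapsto \sum_i \lambda_i f_{\B_i}(c,g)$ is a legitimate (normalised, nonnegative) response function. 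So $\Lambda(\mathsf C)$ is a postprocessing of $\G'$, and $\G'$ is a joint observable for the whole convex hull. This gives the claim.

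The main point requiring a little care — rather than a genuine obstacle — is the interplay between the two notions of postprocessing used in the paper: the ``copying + conditional relabelling'' form \eqref{postprocessing} used to \emph{define} compatibility, and the postprocessing of one observable into another that appears in the statement. One must ensure that composing a postprocessing-of-$\G$ with a further postprocessing (as in step (i)) again yields something of the admissible form \eqref{postprocessing}, i.e. that the composed kernel $\sum_a h(b,a) f_{\A}(a,g)$ is a bona fide Markov kernel; this is immediate since a composition of stochastic kernels is stochastic. A second technical remark: convex combinations of POVMs are only defined when the outcome sets are identified, but since relabelling of outcomes is itself a (deterministic) postprocessing, one may freely assume a common outcome set before taking mixtures, so no generality is lost. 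Everything else is the routine exchange of $\Lambda$ with countable sums/integrals, justified by linearity and normality of the channel.
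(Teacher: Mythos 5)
Your proof is correct and follows essentially the same route as the paper's: transitivity of postprocessing via the composed kernel $\sum_a h(b,a)f_{\A}(a,g)$, and closure under mixing via the convexly combined kernel, all with respect to the one joint observable $\G$. The only difference is that you spell out explicitly that $\Lambda$ commutes with postprocessing sums and convex combinations by linearity, a step the paper leaves implicit.
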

\begin{proof} The claim concerning postprocessing follows immediately from the transitivity of the postprocessing relation, i.e. if $\mathcal M$ is compatible with $\A(a) =\int f_\A(a,g) \G(dg)$ for each $\A\in \mathcal M$, and we postprocess each $\A$ further into $\B(b) =\sum_a h(b,a) \A(a)$ using an arbitrary postprocessing function $h$, the resulting $\B(b)=\int f_\B(b,g) \G(dg)$ is a postprocessing of $\G$ with $f_\B(b,g)=\sum_a h(b,a)f_\A(a,g)$. Furthermore, given two of such observables with the same outcome set, the convex combination $\B=\lambda \B_1+(1-\lambda)\B_2$ is also a postprocessing of $\G$ with function $f(b,g)=\lambda f_{\B_1}(b,g)+(1-\lambda)f_{\B_2}(b,g)$. Hence, the total set of observables obtained from $\G$ in this way is compatible.
\end{proof}

An observable $\A$ is called rank-1 if each nonzero operator $\A(x)$ has rank 1.
Using Prop.~\ref{prop:convex} we get the following result, exhibiting one basic set having complete incompatibility.

\begin{proposition}\label{rank1tib} 
The set of all rank-1 observables with at most $d^2$ outcomes has complete incompatibility, where $d$ is the dimension of the Hilbert space.
\end{proposition}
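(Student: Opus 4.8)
The plan is to reduce an arbitrary channel that breaks the incompatibility of the rank-1 observables with at most $d^2$ outcomes to one that breaks the incompatibility of \emph{all} observables, using Proposition~\ref{prop:convex} to enlarge the test set. First I would observe that every observable $\A$ on $\mathcal H$ with $d = \dim\mathcal H$ can be obtained as a postprocessing of a suitable rank-1 observable with at most $d^2$ outcomes: indeed, write each effect $\A(x) = \sum_{k} \lambda_{x,k} \kb{\psi_{x,k}}{\psi_{x,k}}$ in a spectral-type decomposition into rank-1 positive operators; collecting all the rank-1 pieces $\{\kb{\psi_{x,k}}{\psi_{x,k}}\}$ (rescaled so that they sum to $\id$) gives a rank-1 POVM $\B$, and $\A$ is recovered from $\B$ by the deterministic postprocessing that maps the label $(x,k)$ to $x$. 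The only point needing care is the outcome count: a general POVM on $\mathcal H$ can always be chosen to have at most $d^2$ effects (since the effects live in the $d^2$-dimensional real space of self-adjoint operators, and a POVM with more outcomes can be merged without changing joint measurability relations in a way compatible with postprocessing), and a rank-1 refinement of such an $\A$ then has at most $d^2$ nonzero rank-1 effects as well — here one should be slightly careful and instead argue that it suffices to realize each \emph{individual} observable as a postprocessing of \emph{some} rank-1 observable with $\le d^2$ outcomes, rather than realizing all of them from a single one.

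Granting this, the argument runs as follows. Let $\Lambda$ be a channel that breaks the incompatibility of the set $\mathcal R$ of all rank-1 observables with at most $d^2$ outcomes. Let $\mathcal O$ denote the set of all observables on $\mathcal H$. By the reduction above, every $\A \in \mathcal O$ is a postprocessing of some element of $\mathcal R$. Hence $\mathcal O$ is contained in the convex hull of the set of all postprocessings of elements of $\mathcal R$. By Proposition~\ref{prop:convex}, since $\Lambda$ breaks the incompatibility of $\mathcal R$, it also breaks the incompatibility of that convex hull, and therefore of $\mathcal O$. Thus $\Lambda \in \ibc$, which is exactly the assertion that $\mathcal R$ has complete incompatibility.

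I expect the main obstacle to be the bookkeeping around the bound $d^2$ on the number of outcomes: one must make sure that the rank-1 observable used to recover a given $\A$ can itself be taken with at most $d^2$ outcomes. The clean way is (i) recall that the image of a channel acting on observables is still an observable, and that Proposition~\ref{prop:convex} already lets us freely postprocess, so it is enough that $\mathcal O$ lies in the postprocessing-closure of $\mathcal R$; and (ii) invoke the standard fact (Carath\'eodory-type argument in the real vector space of Hermitian operators, of dimension $d^2$) that any observable admits a rank-1 refinement with at most $d^2$ nonzero effects — or, if even this is delicate, note that one may first coarse-grain $\A$ to at most $d^2$ outcomes, then refine, keeping the count controlled. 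Everything else — the transitivity of postprocessing and the convexity step — is already supplied by Proposition~\ref{prop:convex}, so no further work is needed there.
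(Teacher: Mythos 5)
Your reduction hinges on the claim that every observable is a postprocessing of a \emph{single} rank-1 observable with at most $d^2$ outcomes, and that claim is false. Take a rank-1 observable $\A$ with more than $d^2$ outcomes whose effects $\A(i)=\alpha_i\kb{\phi_i}{\phi_i}$ point in pairwise non-proportional directions (for instance an equal mixture of $d+1$ orthonormal-basis measurements in generic position, keeping all $(d+1)d>d^2$ outcomes distinct). If $\A(i)=\sum_j f(i,j)\B(j)$ with $\B$ rank-1, then $0\leq f(i,j)\B(j)\leq \A(i)$ forces every $\B(j)$ with $f(i,j)>0$ to be proportional to $\kb{\phi_i}{\phi_i}$, so $\B$ must contain all the distinct directions of $\A$ and hence needs more than $d^2$ outcomes. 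The patches you propose do not close this gap: an arbitrary observable cannot simply ``be chosen'' to have at most $d^2$ effects; a rank-1 refinement of an observable with $d^2$ outcomes can have up to $d^3$ nonzero effects, not $d^2$ (each effect may need $d$ spectral pieces); and coarse-graining first goes the wrong way, because postprocessing runs from the fine observable to the coarse one and cannot be inverted to recover $\A$.

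The missing idea is to actually use the convexity half of Prop.~\ref{prop:convex}, which you quote but never exploit: you only place the set of all observables inside the convex hull via the false per-observable claim. The paper's argument writes an arbitrary finite-outcome observable as a convex combination of \emph{extremal} POVMs and invokes the structural result of \cite{HaHePe12} that every extremal has at most $d^2$ outcomes and is a relabeling of a rank-1 extremal; relabeling is a postprocessing, so every observable lies in the convex hull of postprocessings of the rank-1 observables with at most $d^2$ outcomes, even though it is typically not a postprocessing of any single such observable. Prop.~\ref{prop:convex} then yields complete incompatibility. Your overall skeleton (reduce to $\mathcal R$ via Prop.~\ref{prop:convex}) is the right one, but the reduction step must go through the extremal decomposition rather than a rank-1 refinement of each observable.
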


\begin{proof} We can represent each finite-outcome observable as a convex combination of extremal POVMs. 
All extremals have at most $d^2$ outcomes, and every extremal is a relabeling of some rank one extremal \cite{HaHePe12}. 
Since relabelling is a special case of postprocessing, Proposition \ref{prop:convex} completes the proof. 
\end{proof}

With this result, we are able to prove a connection between incompatibility breaking channels, and the previously defined concept of $n$-incompatibility breaking channels. 

\begin{proposition}\label{prop:cap}
$\ibc =\bigcap_{n\geq 2} \nibc$.
\end{proposition}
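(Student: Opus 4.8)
The plan is to prove the two inclusions separately. The inclusion $\ibc \subseteq \bigcap_{n\geq 2}\nibc$ is essentially trivial: if $\Lambda$ breaks the incompatibility of the set of \emph{all} observables, then in particular it breaks the incompatibility of every $n$-element subset, so $\Lambda\in\nibc$ for all $n\geq 2$, and hence $\Lambda$ lies in the intersection. I would dispatch this in one sentence.

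The substantial direction is $\bigcap_{n\geq 2}\nibc \subseteq \ibc$. Here I would use Proposition~\ref{rank1tib}: the set $\mathcal A_0$ of all rank-$1$ observables with at most $d^2$ outcomes has complete incompatibility, so it suffices to show that any $\Lambda\in\bigcap_{n\geq 2}\nibc$ breaks the incompatibility of $\mathcal A_0$. The key observation is a cardinality count. The collection $\mathcal A_0$, although it contains infinitely many observables (the rank-$1$ projections vary continuously), is a set whose \emph{individual} members each have at most $d^2$ outcomes, but as a family it is still infinite; so a single application of $n$-incompatibility breaking for fixed $n$ does not immediately finish the job. What rescues this is that the image $\Lambda(\mathcal A_0)$ need only be shown compatible, and compatibility of an infinite family can be checked via the postprocessing formulation of Subsec.~\ref{sec:joint}. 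One route: since $\Lambda\in\nibc$ for every $n$, every \emph{finite} subfamily of $\Lambda(\mathcal A_0)$ is compatible, and one then invokes a compactness/limit argument for the joint POVM. Concretely, I would fix an enumeration-free argument: for each finite subset $F\subseteq \mathcal A_0$ there is a joint observable $\G_F$; the joint observables can be taken with outcome space contained in a fixed compact set (for instance, by Prop.~\ref{prop:convex} and the structure of rank-$1$ extremals, one can realize all the required postprocessings from a single universal rank-$1$ observable such as a continuous POVM on the space of pure states), and one extracts a limit point $\G$ in the weak-$*$ topology on POVMs that serves as a joint observable for all of $\Lambda(\mathcal A_0)$.

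An alternative, and probably cleaner, route avoids limits altogether: observe that by Prop.~\ref{prop:convex}, breaking the incompatibility of a \emph{single} suitably rich observable already breaks the incompatibility of its whole postprocessing-and-convex-hull closure. If that closure contains all of $\mathcal A_0$, then it suffices to show $\Lambda$ breaks the incompatibility of that one observable — which is a statement about a finite (indeed, one-element) collection, hence covered by $\Lambda\in 1\text{-IBC}$, or rather we do not even need $n$-incompatibility breaking for a singleton since a single observable is always ``compatible with itself.'' That cannot be quite right, so the honest version must genuinely use that $\Lambda$ breaks incompatibility of \emph{all} finite subsets and then pass to the infinite family. I would therefore structure the proof as: (i) reduce to $\mathcal A_0$ via Prop.~\ref{rank1tib}; (ii) note every finite subfamily of $\Lambda(\mathcal A_0)$ is compatible since $\Lambda\in\nibc$ for the appropriate $n$; (iii) conclude that $\Lambda(\mathcal A_0)$ itself is compatible by a compactness argument on the set of joint POVMs with outcomes in a fixed compact space.

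The main obstacle I anticipate is step (iii): passing from compatibility of all finite subfamilies to compatibility of the whole infinite family $\Lambda(\mathcal A_0)$. This requires identifying a fixed compact outcome space and a suitable topology on POVMs in which the net of partial joint observables $(\G_F)_F$ has a convergent subnet whose limit restricts correctly onto each marginal. The technical care is in checking that the marginal (postprocessing) relations are preserved under this limit, which amounts to continuity of $\G\mapsto \int f(a,g)\,\G(dg)$ — routine once the topology is pinned down, but it is the place where the argument genuinely has content beyond bookkeeping.
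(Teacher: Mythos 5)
Your plan coincides with the paper's own proof: the trivial inclusion, the reduction via Prop.~\ref{rank1tib} to observables with at most $d^2$ outcomes, the observation that $\Lambda\in\bigcap_{n\geq 2}\nibc$ makes every finite subfamily of the image compatible, and then a compactness/finite-intersection argument for POVMs over a fixed compact outcome space. The paper executes your step (iii) exactly as you anticipate (not via a ``universal'' rank-1 observable): it pads every outcome set to $\Omega=\{1,\ldots,d^2\}$, takes the Tychonov-compact product $\Omega_0=\Omega^{\mathcal A}$ as the common outcome space, extends each finite joint observable trivially to $\Omega_0$, and uses the Riesz--Markov--Kakutani correspondence with a weak topology on $\mathcal L(C(\Omega_0),\lh)$ to show the set of observables is compact and each marginal-constraint set $\mathcal M_\A$ is closed, so the finite intersection property yields a single joint observable for all of $\Lambda(\mathcal A)$.
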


\begin{proof}
The inclusion $\ibc\subset \bigcap_{n\geq 2} \nibc$ is trivial. For the converse, let $\h A$ denote the set of all observables with at most $d^2$ outcomes. By Proposition \ref{rank1tib}, $\h A$ has complete incompatibility, so in order to establish $\bigcap_{n\geq 2} \nibc\subset \ibc$ it suffices to prove that $\Lambda (\h A)$ is compatible for $\Lambda\in\bigcap_{n\geq 2}\nibc$. 
By adding trivial outcomes if necessary, we can assume that each $\A\in\h A$ has the outcome set $\Omega = \{1,2,\ldots, d^2\}$. 
By Tychonov's theorem, the cartesian product set $\Omega_0 = \Omega^{\h A}$ is compact in the product topology. 
For each $\A\in\h A$, let $\pi_\A:\Omega_0\to\Omega$ be the canonical projection. If $\Lambda \in \bigcap_{n\geq 2}\nibc$, then for any finite collection $\h F\subset\h A$ the image $\Lambda(\h F)$ is compatible, and there exists a joint observable $\tilde \G$ on the finite product space $\Omega^{\h F}$. Since $\Omega_0=\Omega^{\h F}\times \Omega^{\mathcal A\setminus \mathcal F}$, we can trivially extend $\tilde \G$ to an observable $\G_{\mathcal F}$ on $\Omega_0$ by defining $\G_{\mathcal F}(X\times Y)=\tilde \G(X)\mu(Y)$ for all $X\subset \Omega^{\h F}$ and $Y$ in the Borel $\sigma$-algebra $\mathcal{B}(\Omega^{\mathcal A\setminus \mathcal F})$ of subsets of the (possibly infinite) set $\Omega^{\mathcal A\setminus \mathcal F}$, where $\mu$ is any probability measure. Then $\A = \G_{\h F}\circ \pi_\A^{-1}$ for all $\A\in\h F$.

Now for any $\A\in\h A$, let $\h M_\A$ denote the set of all observables $\G$ on $\Omega_0$ such that $\Lambda(\A) = \G \circ \pi_\A^{-1}$. Then the above results imply that the collection $\{ \h M_\A\}_{\A\in\h A}$ has the finite intersection property. Hence, if we can make the sets $\h M_\A$ compact by putting a suitable topology on the set of all POVMs, we can conclude that $\bigcap_{\A\in\h A} \h M_\A\neq \emptyset$, showing that $\Lambda (\h A)$ is compatible.

In order to establish suitable compactness, we let $C(\Omega_0)$ denote the Banach space of continuous complex valued functions on $\Omega_0$, and denote by $\h L(C(\Omega_0),\lh)$ the Banach space of bounded linear maps $C(\Omega_0)\to\lh$. By the POVM analogue of the Riesz-Markov-Kakutani representation theorem \cite{MMMPI80}, the regular Borel POVMs\footnote{A regular Borel POVM on $\Omega_0$ is a positive operator valued measure $\G:\mathcal B(\Omega_0)\to \lh$ such that the total variation of each complex measure $X\mapsto {\rm tr}[T\G(X)]$ is a regular Borel measure for each (trace class) operator $T$.} on $\Omega_0$ are in one-to-one correspondence with positive normalized ($\M(1) =\id$) elements of $\h L(C(\Omega_0),\lh)$. We equip $\h L(C(\Omega_0),\lh)$ with the locally convex Hausdorff vector topology coming from the separating family of seminorms $p_{(T,f)} (\Mo) = \vert\tr{T \Mo(f)}\vert$ with $T\in\lh$ (recall that $\hi$ is finite dimensional) and $f\in C(\Omega_0)$. Using standard arguments, it is easy to see that the closed unit ball $\mathcal M_1$ of $\mathcal L(C(\Omega_0),\lh)$ is compact in this topology \footnote{In fact, for each pair $(T,f)\in \lh\times C(\Omega_0)$, define the compact sets $S_{T,f} = \{ \lambda \in \complex \mid |\lambda|\leq \|T\|_1\|f\| \}$ (where $\|T\|_1$ is the trace norm). By Tychonov's theorem, the cartesian product of these topological spaces is also compact. The product is the set of functions $B:\lh\times C(\Omega_0)\to \complex$ with $|B(T,f)|\leq \|T\|_1\|f\|$, equipped with the topology of pointwise convergence. By the standard duality, the subset of bilinear functions $B$ correspond exactly to the elements of $\mathcal L(C(\Omega_0),\lh)$, and this subset is closed because the limit of a pointwise convergent net of bilinear maps is clearly bilinear. Hence $\mathcal M_1$ is compact.}. The subset
\begin{equation}
\bigcap_{T\geq 0,f\geq 0} \{ \Mo\in \mathcal M_1\mid \tr{T\Mo(f)}\in [0,\infty)\}\subset \mathcal M_1
\end{equation}
coincides with the set of all observables. It is clearly closed and therefore compact as well.

Then each $\h M_\A$ is closed in the above topology: if $(\Mo_i)_{i\in\h I}$ is a net of elements of $\h M_\A$ converging to a map $\Mo\in \h L(C(\Omega_0),\lh)$ and $g:\Omega\to\complex$ is a function, then $g\circ\pi_\A\in C(\Omega_0)$ and hence
$$
\tr{T \Mo (g\circ \pi_A)} = \lim_i \tr{T\Mo_i (g\circ \pi_\A)} = \lim_i \tr{T\A (g)} = \tr{T\A(g)}
$$
for all $T\in\lh$. In other words, $\Mo \circ \pi_\A^{-1} = \A$. Hence $\h M_\A$ is closed and therefore compact. The proof is complete.
\end{proof}

The above result thus gives another way of checking whether or not a channel is incompatibility breaking or not, namely, by checking the compatibility of the images $\Lambda(\A_1),\ldots, \Lambda(\A_n)$ of finite sets of observables. 

%%%%%%%%%%%
\section{Breaking incompatibility with white noise}\label{sec:wn}
%%%%%%%%%%%

As a special instance of the previously defined class of noisy channels $\Gamma_{t,\Theta,\eta}$, we have the class of \emph{white noise mixing channels} $\wnc_t \equiv \Gamma_{t,{\rm id},\id/d}$, $0\leq t < 1$. The action of these types of channels is very simple:
\begin{align*}
(\wnc_t)_*(\varrho)&=t\varrho+(1-t)\frac{\id}{d} & \wnc_t(A)&= t A +(1-t) \frac1d {\rm tr}[A]\id.
\end{align*}
Here the parameter $1-t$ represents the amount of white noise (understood as the completely depolarising channel) mixed into the state $\varrho$. 
With $t=1$ the channel is identity, hence clearly does not break the incompatibility of any set. The values of $1-t$ for which $\wnc_t$ becomes $n$-incompatibility breaking and incompatibility breaking, respectively, represent the robustness of the incompatibility of the corresponding sets of observables against white noise. 
Similar ideas on noise-robustness of incompatibility have been recently considered in \cite{HeKiRe15,Ha15} for pairs of observables.

%%%%%%%%%%%
\subsection{Robustness of incompatibility of finite collections of observables}
%%%%%%%%%%%

For white noise mixing channels we have the following improvement of Prop. \ref{prop:nIBC-noisy}. It describes the amount of white noise needed to turn all finite collections of observables compatible.

\begin{proposition} \label{prop:nIBC-wn}
The channel $\wnc_t$ is $n$-incompatiblity breaking for all
\begin{equation}
\label{eq:clonebound}
0\leq t\leq\frac{n+d}{n(d+1)} \, 
\end{equation}
where $d=\dim \hi$.
\end{proposition}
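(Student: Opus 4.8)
The plan is to realize the claimed bound via an explicit "universal" joint observable built from an optimal approximate cloning map, mirroring the structure of the proof of Proposition~\ref{prop:nIBC-noisy} but exploiting the special form of white noise. The starting point is the standard fact that for a $d$-dimensional system there is a completely positive universal $1\to n$ symmetric cloning channel $\Xi_n:\lh\to\lh\otimes\cdots\otimes\lh$ ($n$ copies) such that tracing out all but the $j$-th output and restricting to a fixed input observable reproduces the white-noise-attenuated observable with the optimal fidelity parameter. Concretely, if $R_j:\lh\otimes\cdots\otimes\lh\to\lh$ denotes the partial trace over all tensor factors except the $j$-th, then $R_j\circ\Xi_n = \wnc_{\eta_{n,d}}$ with the optimal symmetric cloning shrinking factor $\eta_{n,d}=\frac{n+d}{n(d+1)}$ (this is the well-known Werner/Keyl–Werner value for the $1\to n$ cloner in dimension $d$). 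I would cite this cloning result rather than re-derive it.

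Given observables $\A_1,\dots,\A_n$, I would then define the candidate joint observable $\G$ on $\Omega_{\A_1}\times\cdots\times\Omega_{\A_n}$ by
\begin{align*}
\G(a_1,\dots,a_n) = \Xi_n^*\big(\A_1(a_1)\otimes\A_2(a_2)\otimes\cdots\otimes\A_n(a_n)\big),
\end{align*}
i.e.\ the Heisenberg-picture cloning map applied to the tensor-product POVM element of the $n$ given observables on the $n$ clones. Positivity and normalization of $\G$ are immediate from complete positivity and unitality of $\Xi_n$ together with the POVM properties of each $\A_j$. The key computation is the marginal: summing $\G$ over all indices except the $j$-th uses $\sum_{a_i}\A_i(a_i)=\id$ for $i\neq j$, which collapses the tensor product to $\id\otimes\cdots\otimes\A_j(a_j)\otimes\cdots\otimes\id$, and then $\Xi_n^*$ applied to this is exactly $R_j\circ\Xi_n$ evaluated on $\A_j(a_j)$, i.e.\ $\wnc_{\eta_{n,d}}(\A_j(a_j))$. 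Hence $\G$ is a joint observable for $\wnc_{\eta_{n,d}}(\A_1),\dots,\wnc_{\eta_{n,d}}(\A_n)$, so $\wnc_t$ with $t=\eta_{n,d}=\frac{n+d}{n(d+1)}$ is $n$-incompatibility breaking. Finally, for $t<\eta_{n,d}$ we invoke Proposition~\ref{prop:basic_nibc}: $\wnc_t = \wnc_{t/\eta_{n,d}}\circ\wnc_{\eta_{n,d}}$ (since $\wnc_s\circ\wnc_{s'}=\wnc_{ss'}$), and $\nibc$ is an ideal in $\ch$, so composing an $n$-IBC channel with any channel keeps it in $\nibc$. This yields the full interval $0\le t\le \frac{n+d}{n(d+1)}$.

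The main obstacle I anticipate is pinning down the correct optimal-cloning constant and justifying that the symmetric universal cloner attains precisely $\eta_{n,d}=\frac{n+d}{n(d+1)}$ for \emph{each} single-copy marginal simultaneously; this is exactly the known result for the optimal symmetric $1\to n$ cloner in dimension $d$, but it must be cited carefully (the shrinking factor for observables is the same as the state-fidelity shrinking factor by the duality between Heisenberg and Schr\"odinger pictures). A secondary, purely bookkeeping point is that in the formula for $\G$ one should really write $\Xi_n^*$ for the channel $\lh^{\otimes n}\to\lh$ dual to the Schr\"odinger-picture cloner, and check that $R_j\circ\Xi_n^{\mathrm{Heis}}$ — the composition used above — indeed equals $\wnc_{\eta_{n,d}}$ as a map on $\lh$; this is where the universality (covariance) of the cloner is used, guaranteeing the output is white-noise form independent of the input observable. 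Everything else (positivity, normalization, the semigroup property $\wnc_s\wnc_{s'}=\wnc_{ss'}$, and the ideal property) is routine.
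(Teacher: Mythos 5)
Your proposal is correct and takes essentially the same route as the paper: the paper's joint observable $\G(x_1,\ldots,x_n)=\frac{d}{\binom{d+n-1}{n}}\ptr{S_n\,\A_1(x_1)\otimes\cdots\otimes\A_n(x_n)\,S_n}$ is exactly the Heisenberg dual of the Keyl--Werner symmetric $1\to n$ cloner applied to the product effect, i.e.\ your $\G$. The differences are only presentational: the paper writes the cloner explicitly through the symmetrizer $S_n$ and checks the marginals by direct calculation, whereas you cite the known shrinking factor $\frac{n+d}{n(d+1)}$ and make explicit the reduction of the case $t<\frac{n+d}{n(d+1)}$ via the semigroup property and the ideal property of $\nibc$, which the paper leaves implicit.
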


\begin{proof}
Following the approximate cloning scheme of \cite{KeWe99}, we can proceed in the same way as in \cite{HeScToZi14} to find a sufficient condition for the compatibility of noisy versions of any $n$ observables.  
Let $\{\A_1,\ldots,\A_n\}$ be a set of observables. 
We define $\G$ as
\[
\G(x_1,x_2,\ldots,x_n)=\frac{d}{{d+n-1 \choose n}}\ptr{S_n \A_1(x_1)\otimes \A_2(x_2)\otimes\cdots\otimes \A_n(x_n)S_n}
\]
for all $x_j\in\Omega_j$, where $S_n$ is the projection from $\hi_d^{\otimes n}$ to its symmetric subspace and $\ptr{\cdot}$ is partial trace over all but one part of the system.
A direct calculation shows that $\G$ is a joint observable for the observables $\wnc_t(\A_1), \ldots,\wnc_t(\A_n)$ with $t=\frac{n+d}{n(d+1)}$.
\end{proof}

%%%%%%%%%%%
\subsection{$2-\ibc\neq 3-\ibc$}\label{sec:different}
%%%%%%%%%%%

As a consequence of Prop. \ref{prop:nIBC-wn} we can demonstrate that there are channels that are $2$-incompatibility breaking but not $3$-incompatibility breaking. 
Let us consider three qubit observables 
\begin{align}
\X(\pm 1) = \half ( \id \pm \sigma_x) \, , \label{eq:sigmax}\\
\Y(\pm 1) = \half ( \id \pm \sigma_y) \, , \label{eq:sigmay}\\
\Z(\pm 1) = \half ( \id \pm \sigma_z) \, . \label{eq:sigmaz}
\end{align}
The image of $\X$ under the action of $\wnc_t$ is
\begin{align*}
\wnc_t(\X)(\pm 1) = \half ( \id \pm t\sigma_x) \, , 
\end{align*}
and similarly for $\Y$ and $\Z$.
Therefore, if we choose $t$ such that $1/\sqrt{3}<t$, then the observables $\wnc_t(\X)$, $\wnc_t(\Y)$ and $\wnc_t(\Z)$ are incompatible \cite{} and hence $\wnc_t$ is not $3$-incompatibility breaking.
But if $t$ also satisfies $t\leq 2/3$, then by Prop. \ref{prop:nIBC-wn} ($d=2$, $n=3$) the channel $\wnc_t$ is $2$-incompatibility breaking.
In conclusion, $3-\ibc\subsetneq 2-\ibc$.

In the following we want to extend the previous observation and to show that every set $\nibc$ is strictly containing some higher set $\mibc$ at least in some Hilbert space with high enough dimension. 
For this purpose, we recall that for every integer $n\geq 3$, it is possible to construct a set of $n$ incompatible observables such that every subset of $n-1$ observables is compatible \cite{KuHeFr14}.
We say that this kind of set of observables is a \emph{Specker set of order $n$} (see  \cite{LiSpWi11} for an explanation of Specker's parable of the overprotective seer).

The explicit construction presented in \cite{KuHeFr14} uses a Clifford algebra $CL(m)$ of $m$ generators for a Specker set of order $m$,  and the observables are very similar to those in \eqref{eq:sigmax}--\eqref{eq:sigmaz}.
The dimension of the Hilbert space is then the same as the chosen representation of  $CL(m)$.
If $m$ is even, then $CL(m)$ has a single irreducible representation of degree $2^{m/2}$, and if $m$ is odd, then $CL(m)$ has two irreducible representations, each of degree $2^{(m-1)/2}$. Furthermore, for odd $m$ we can make use of the explicit form of one of the representations that is known.

\begin{proposition} \label{prop:nIBC-different}
For every integer $n\geq 3$ and any odd integer $m\geq n^2$, the sets $\nibc$ and $\mibc$ are different in the Hilbert space of dimension $2^\frac{m-1}{2}$.
\end{proposition}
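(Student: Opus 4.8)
The plan is to combine the Specker set construction from \cite{KuHeFr14} with the white-noise robustness bound of Proposition~\ref{prop:nIBC-wn} and the qubit-style incompatibility criterion used in Subsection~\ref{sec:different}. The goal is to exhibit, in the Hilbert space of dimension $d=2^{(m-1)/2}$, a single white-noise channel $\wnc_t$ that is $n$-incompatibility breaking but not $(n+1)$-incompatibility breaking; since $n+1\le m$ this yields $\mibc\subsetneq\nibc$ once we know the chain of inclusions \eqref{basic_inclusions} and that $\wnc_t\in\nibc\setminus\mibc$ forces strictness somewhere between level $n$ and level $m$. (Strictly: $\wnc_t\in\nibc$ but $\wnc_t\notin (n+1)\text{-}\ibc\supseteq\mibc$, and since $\nibc\supseteq\mibc$ the two sets differ.)

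First I would recall the explicit construction of \cite{KuHeFr14}: for odd $m$ one picks $m$ anticommuting Hermitian unitaries (Dirac-type matrices) $\Gamma_1,\dots,\Gamma_m$ in the known degree-$2^{(m-1)/2}$ irreducible representation of the Clifford algebra $CL(m)$, and forms the two-outcome observables $\A_k(\pm1)=\tfrac12(\id\pm\Gamma_k)$. This family of $m$ observables is a Specker set of order $m$: it is incompatible, but every $(m-1)$-element subset is compatible. The key quantitative fact I need to extract from that construction is the \emph{sharp noise threshold}: there is an explicit $t_m$ (depending only on $m$, of the form $1/\sqrt{m}$ up to the normalization conventions, analogous to the $1/\sqrt3$ appearing for the three Pauli observables) such that the noisy family $\{\wnc_t(\A_1),\dots,\wnc_t(\A_m)\}$ is incompatible precisely when $t>t_m$, while the same computation restricted to any $n$-element subset gives incompatibility of $\{\wnc_t(\A_{k_1}),\dots,\wnc_t(\A_{k_n})\}$ iff $t>t_n$, with $t_n$ strictly decreasing in $n$. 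Then the argument is: choose $t$ with $t_{n+1}<t\le\frac{n+d}{n(d+1)}$. For such $t$, Proposition~\ref{prop:nIBC-wn} (applied with this $n$ and $d=2^{(m-1)/2}$) gives $\wnc_t\in\nibc$, while the existence of an $(n+1)$-element subset of the Specker set whose noisy image is incompatible shows $\wnc_t\notin (n+1)\text{-}\ibc$, hence $\mibc\subsetneq\nibc$.

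The step that requires care is checking that the interval $\bigl(t_{n+1},\,\frac{n+d}{n(d+1)}\bigr]$ is nonempty, i.e. $t_{n+1}<\frac{n+d}{n(d+1)}$, and this is exactly where the hypothesis $m\ge n^2$ (hence $d=2^{(m-1)/2}$ huge) enters. As $d\to\infty$ the clone bound $\frac{n+d}{n(d+1)}\to\frac1n$, so it suffices that $t_{n+1}<\frac1n$, i.e. that the $(n+1)$-outcome Specker incompatibility threshold beats $1/n$; since $t_{n+1}$ behaves like $c/\sqrt{n+1}$ and $c/\sqrt{n+1}<1/n$ fails for large $n$, one cannot take $d$ merely large — one needs the clone bound itself, for the \emph{finite} $d=2^{(m-1)/2}$, to exceed $t_{n+1}$, and $m\ge n^2$ makes $d$ large enough that $\frac{n+d}{n(d+1)}$ is close enough to $\frac1n$ while simultaneously the relevant Specker threshold (which in the Clifford representation of degree $2^{(m-1)/2}$ depends on $m$ as well as on the subset size) is controlled. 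I would therefore devote the bulk of the proof to (i) pinning down the exact incompatibility threshold for $k$ anticommuting noisy dichotomic observables in that representation — this is a self-contained computation using the SIC-like / mutually-unbiased structure of anticommuting $\Gamma$'s, essentially reducing to the known qubit case tensored up — and (ii) verifying the elementary but delicate inequality $t_{n+1}(m)<\frac{n+d(m)}{n(d(m)+1)}$ for all odd $m\ge n^2$. The main obstacle is thus not conceptual but the bookkeeping in step (i): getting the precise constant in the noisy-Specker-set incompatibility threshold right, rather than just an order-of-magnitude estimate, since the whole argument hinges on that constant being below the clone bound.
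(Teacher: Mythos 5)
Your overall strategy (combine the Clifford/Specker construction with the cloning bound of Proposition~\ref{prop:nIBC-wn} and find a noise value $t$ in between the two thresholds) is the right one, but the level at which you try to witness the failure of incompatibility breaking makes the argument collapse. You propose to pick $t$ with $t_{n+1}<t\le\frac{n+d}{n(d+1)}$, where $t_{n+1}$ is the incompatibility threshold of an $(n+1)$-element subset of the anticommuting family. That threshold is dimension-independent: $k$ noisy anticommuting dichotomic observables $\half(\id\pm t\delta_j)$ are compatible precisely when $kt^2\le 1$, in any representation, so $t_{n+1}=1/\sqrt{n+1}$ and it is not ``controlled'' by taking $m$ or $d$ large, contrary to your hope. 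Meanwhile the cloning bound satisfies $\frac{n+d}{n(d+1)}=\frac1n\bigl(1+\frac{n-1}{d+1}\bigr)$, which decreases towards $1/n$ as $d$ grows; requiring it to exceed $1/\sqrt{n+1}$ forces $d<\frac{n(\sqrt{n+1}-1)}{n-\sqrt{n+1}}$, e.g.\ $d<3$ for $n=3$, whereas the proposition stipulates $d=2^{(m-1)/2}\ge 2^{(n^2-1)/2}$ (already $d\ge16$ for $n=3$). So the interval $\bigl(1/\sqrt{n+1},\,\frac{n+d}{n(d+1)}\bigr]$ you need is empty in exactly the regime of the statement, and the hypothesis $m\ge n^2$ works \emph{against} your version of the inequality rather than for it. The half-sentence where you notice that $1/\sqrt{n+1}<1/n$ fails is precisely the obstruction; it cannot be repaired by step (i) of your plan, because the constant you would compute there is genuinely $1/\sqrt{n+1}$.

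The fix is to separate at level $m$, not at level $n+1$ --- and this is all the proposition asks for, since it only claims $\nibc\neq\mibc$. Take the full family of $m=2p+1$ noisy observables $\wnc_t(\A_j)(\pm)=\half(\id\pm t\delta_j)$: by \cite{KuHeFr14} it is incompatible whenever $t>1/\sqrt{m}$, so such a $\wnc_t$ is not in $\mibc$. Since $m\ge n^2$ gives $1/\sqrt{m}\le 1/n<\frac{n+d}{n(d+1)}$, the interval $\bigl(1/\sqrt{m},\,\frac{n+d}{n(d+1)}\bigr]$ is nonempty, and any $t$ in it yields $\wnc_t\in\nibc\setminus\mibc$ by Proposition~\ref{prop:nIBC-wn}. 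No exact threshold for proper subsets, and no adjacent-level separation $\nibc\neq(n+1)$-incompatibility breaking, is needed (nor is the latter obtainable by this method in the given dimension).
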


\begin{proof}
Fix an integer $p \geq (n^2-1)/2$ and a matrix representation of Clifford algebra in the Hilbert space of dimension $d=2^p$. 
The representation consists of $m=2p+1$ selfadjoint matrices $\delta_j$ that define $m$ different binary observables $\A_j$ as
\[
\A_j (\pm):=\half(\id \pm  \delta_j) \, .
\]
We have
\begin{align*}
\wnc_t(\A_j)(\pm) = \half(\id \pm  t \delta_j) \, , 
\end{align*}
hence from \cite{KuHeFr14} we know that the observables $\wnc_t(\A_1),\ldots,\wnc_t(\A_m)$ are incompatible if and only if $t>\frac{1}{\sqrt{m}}$.
That is, the channel $\wnc_t$ is not $m$-incompatibility breaking for any $t>\frac{1}{\sqrt{m}}$.
On the other hand, if $t$ satisfies the inequality in \eqref{eq:clonebound}, then $\wnc_t$ is $n$-incompatibility breaking.
There exists a $t$ that simultaneously satisfies these two inequalities if
\begin{align}
\frac{1}{\sqrt{2p+1}}<\frac{n+2^p}{n(2^p+1)} \, ,
\end{align}
which is equivalent to
\begin{align}
n < \frac{2^p \sqrt{2p+1}}{2^p+1-\sqrt{2p+1}}  \, .
\end{align}
This inequality is satisfied since $n\leq\sqrt{2p+1}$.
\end{proof}

%%%%%%%%%%%
\subsection{Robustness of incompatibility of the set of all measurements}
%%%%%%%%%%%

We now proceed to determine the noise parameter $t$ for which the white noise channel becomes incompatibility breaking. This can conveniently be done using the hidden state models appearing in the context of steering. In fact, Wiseman et al.\ \cite{JoWiDo07} use Werner's construction \cite{Werner89} to obtain hidden state models for the isotropic states to derive steerability conditions for the set of projective measurements. A generalisation by Almeida et al.\ \cite{AlPiBaToAc07} (based on \cite{Ba02}) involves rank-1 POVMs; it has been adapted to steering context in \cite{QuEtal15}. In these models, the probability space for the hidden variable is the unitary group $U(d)$, with the normalised Haar measure ${\rm d}U$, and the hidden states are simply given by
$$
U\mapsto U|\varphi_0\rangle\langle \varphi_0|U^*,
$$
where $\varphi_0$ is some fixed fiducial state the choice of which does not play a role.
Consequently, the joint observable associated with the incompatibility breaking property of the white noise channel is the $U(d)$-covariant normalised POVM
\begin{align}
\G(Z)=d\int_Z U|\varphi_0\rangle\langle\varphi_0|U^* \,{\rm d}U.
\end{align}
Using the steering connection together with Prop. \ref{rank1tib}, we can translate the results of \cite{AlPiBaToAc07} into statements of incompatibility breaking with the white noise channel. 
In order to keep our paper reasonably self-contained, we provide sketches of the proofs, adapted to our context of incompatible measurements. 
The first result, adopted from \cite{JoWiDo07}, concerns the noise-robustness of the total set of all projective measurements $\mathcal{P}$.

\begin{proposition}\label{prop:projections}
The white noise channel 
$\wnc_t$ breaks the incompatibility of $\mathcal P$ if 
\begin{align}
t \leq t_\mathcal{P}\equiv\frac{1}{d-1}(-1+\sum_{k=1}^d\frac 1k) \, .
\end{align}
\end{proposition}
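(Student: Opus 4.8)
The plan is to exhibit an explicit joint observable for the noisy projective measurements $\wnc_t(\P)$, $\P \in \mathcal{P}$, obtained from the $U(d)$-covariant POVM $\G(Z)=d\int_Z U|\varphi_0\rangle\langle\varphi_0|U^* \,\mathrm dU$ by a suitable postprocessing. By Prop.~\ref{prop:convex} (or directly by Prop.~\ref{rank1tib}), it in fact suffices to break the incompatibility of all rank-1 projective measurements, since every projective measurement is a coarse-graining (hence postprocessing) of a rank-1 one; so I would fix a rank-1 projective measurement $\P$ with effects $|e_1\rangle\langle e_1|,\ldots,|e_d\rangle\langle e_d|$ for an orthonormal basis $\{e_k\}$, and look for a postprocessing function $f_\P(k,U)$ with $\sum_k f_\P(k,U)=1$ such that
\begin{equation*}
\int_{U(d)} f_\P(k,U)\, d\,U|\varphi_0\rangle\langle\varphi_0|U^*\, \mathrm dU \;=\; t\,|e_k\rangle\langle e_k| + \tfrac{1-t}{d}\,\id \, .
\end{equation*}
The natural guess, following the Almeida et al./Werner hidden-state construction, is to let $f_\P(k,U)$ depend only on the vector $v = U^*e_k$ (equivalently on the overlaps $|\langle e_j, U\varphi_0\rangle|^2$), and in fact to take $f_\P(k,U)$ to be the indicator that $|\langle e_k, U\varphi_0\rangle|^2$ is the largest among $|\langle e_1,U\varphi_0\rangle|^2,\ldots,|\langle e_d,U\varphi_0\rangle|^2$ — i.e., assign the outcome $k$ whose basis vector is "closest" to $U\varphi_0$. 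Ties form a measure-zero set and can be broken arbitrarily.

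The computational heart of the argument is then to evaluate the resulting operator and check it has exactly the isotropic form above with the stated value of $t_{\mathcal P}$. By $U(d)$-covariance it is enough to compute the diagonal entries $\langle e_j, (\cdots) e_j\rangle$ for $j=k$ and $j\neq k$; off-diagonal entries vanish by an averaging/phase argument. Writing $p_j = |\langle e_j, U\varphi_0\rangle|^2$, the vector $(p_1,\ldots,p_d)$ is uniformly distributed on the probability simplex (this is the standard fact about $|U\varphi_0|^2$ under Haar measure), so the problem reduces to computing $\mathbb E[p_k \mid p_k = \max_j p_j]\cdot \mathbb P[p_k=\max]$ and the analogous quantity for a fixed $j\neq k$. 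The probability that a given coordinate is the maximum is $1/d$ by symmetry; the conditional expectation of the winning coordinate of a uniform simplex point is a classical order-statistics computation whose answer is $\frac1d(1+\frac12+\cdots+\frac1d)=\frac1d\sum_{k=1}^d\frac1k$ up to normalization, and this is precisely what produces the harmonic sum in $t_{\mathcal P}$; solving for $t$ gives $t_{\mathcal P}=\frac{1}{d-1}(-1+\sum_{k=1}^d\frac1k)$. I would present this order-statistics identity and the simplex-uniformity fact, then assemble the pieces: positivity of $\G$ is inherited, normalization $\sum_k f_\P=1$ is built in, and the marginal condition is exactly the displayed isotropic identity, so $\G$ composed with $f$ is a genuine joint observable and $\wnc_{t_{\mathcal P}}(\mathcal P)$ is compatible; monotonicity in $t$ (Prop.~\ref{prop:unital_positive}, since $\wnc_t = \wnc_{t/t_{\mathcal P}}\circ \wnc_{t_{\mathcal P}}$ for $t\le t_{\mathcal P}$) extends the conclusion to all smaller $t$.

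The main obstacle I expect is the order-statistics computation of the conditional expectation of the maximal coordinate of a uniformly random point of the simplex, together with carefully justifying that $(p_1,\ldots,p_d)$ really is uniform on the simplex under the Haar average and that the "nearest basis vector" postprocessing is measurable and well-defined up to null sets. Everything else — covariance reducing the operator identity to two scalar equations, vanishing of off-diagonal terms, inheritance of positivity and normalization, and the reduction from general projective measurements to rank-1 ones via Prop.~\ref{prop:convex} — is routine. Since the statement says we only provide a sketch adapted from \cite{AlPiBaToAc07} and \cite{JoWiDo07}, I would state the simplex identity with a one-line justification and cite those references for the details rather than grinding through the integral.
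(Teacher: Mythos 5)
Your proposal follows essentially the same route as the paper's sketch: the same $U(d)$-covariant POVM $\G$ as joint observable, the same ``maximal overlap'' postprocessing $f_\A(i,U)=\prod_j\Theta(u_i-u_j)$, the reduction to rank-1 projective measurements via Prop.~\ref{prop:convex}, and the same simplex-uniformity/order-statistics computation yielding the harmonic sum and hence $t_{\mathcal P}$. The only addition, the concatenation argument $\wnc_t=\wnc_{t/t_{\mathcal P}}\circ\wnc_{t_{\mathcal P}}$ extending the result to all $t\le t_{\mathcal P}$, is a correct (and implicit in the paper) use of Prop.~\ref{prop:unital_positive}.
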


\begin{proof}[Sketch of proof]
By Prop.~\ref{prop:convex} we can restrict to nondegenerate observables, i.e., observables given by $\A(i) = \vert \psi_i\rangle\langle\psi_i\vert$ where $\{\psi_i\}$ is an orthonormal basis of $\hi$. For any such $\A$, define the postprocessing function
$$
f_\A(i,U)=\begin{cases} 1, & \text{if } \langle U\varphi_0|\A(i) U\varphi_0\rangle=\max_{j} \langle U\varphi_0|\A(j) U\varphi_0\rangle\\
0 & \text{ otherwise }\end{cases}
$$
so that 
$$
\int f_\A(i,U)\, {\rm d}\G(U) = d\int f_\A(i,U) U  \vert \varphi_0\rangle\langle\varphi_0 \vert U^* \, {\rm d}U.
$$
The trick is now to express the vector $U\varphi_0$ as 
\begin{equation}\label{eqn:trick}
U\varphi_0 = \sum_{k=1}^d (x_k + iy_k) \psi_k = \sum_{k=1}^d \sqrt{u_k} e^{i\theta_k} \psi_k
\end{equation}
where $u_k= x_k^2 + y_k^2$ so the integral over the unitary group is replaced with the integral over the unit sphere of $\complex^d$, that is, 
\begin{equation}\label{eqn:measure}
\mathrm{d}{U} = N_d \, \delta \left(1-\sum_{l=1}^d u_l\right)\prod_{k=1}^d\mathrm{d}u_k \mathrm{d}\theta_k
\end{equation}
where $\delta$ is the delta function and $N_d$ is the normalization factor. Since now $\langle U\varphi_0|\A(j) U\varphi_0\rangle = u_j$, the Markov kernel is simply
$$
f_\A(i,{\bf u}) = \prod_{j=1}^d\Theta (u_i-u_j)
$$
where $\Theta$ is the Heaviside step function. It is now a straightforward calculation to see that 
$$
\int f_\A(i,U)\, {\rm d}\G(U)  = t_\mathcal{P} \A(i) + (1-t_\mathcal{P}) \frac{1}{d} \id = \wnc_{t_\mathcal{P}} (\A(i))
$$
which completes the proof.
\end{proof}

We denote by $\mathcal{R}_1$ the set of all rank-1 observables with finite number of outcomes. 
Analogously to Prop. \ref{prop:projections}, we can obtain from \cite{AlPiBaToAc07} a sufficient condition when $\wnc_t$ breaks the incompatibility of $\mathcal{R}_1$. 
This leads to the following conclusion, noted also in \cite{QuEtal15,Pusey15}.

\begin{proposition} 
The white noise mixing channel $\wnc_t$ breaks the incompatibility of $\mathcal{R}_1$ if
\begin{align}\label{eq:rid-ibc}
t \leq t_0 \equiv \frac{(3d-1)(d-1)^{d-1}}{(d+1)d^d} \, .
\end{align}
\end{proposition}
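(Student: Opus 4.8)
The plan is to follow the same scheme as the proof of Proposition~\ref{prop:projections}, taking as the joint-observable candidate the $U(d)$-covariant POVM $\G(Z)=d\int_Z U|\varphi_0\rangle\langle\varphi_0|U^*\,\mathrm dU$ and, for every $\A\in\mathcal{R}_1$, constructing a postprocessing $f_\A$ with $\int f_\A(i,U)\,\mathrm d\G(U)=\wnc_{t_0}(\A(i))$. First I would invoke Proposition~\ref{prop:convex} to reduce to extremal rank-1 observables, which may be written as $\A(i)=\alpha_i|\xi_i\rangle\langle\xi_i|$ with $\alpha_i>0$, $\|\xi_i\|=1$ and $\sum_i\alpha_i|\xi_i\rangle\langle\xi_i|=\id$. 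Then, exactly as in \eqref{eqn:trick}--\eqref{eqn:measure}, I would replace the Haar integral over $U(d)$ by an integral over the unit sphere of $\complex^d$ with the Dirichlet-type measure \eqref{eqn:measure}; after integrating out the phases, the overlaps $u_j=|\langle\xi_j|\psi\rangle|^2$ become uniformly distributed on the probability simplex, and the condition to be verified takes the form
\begin{equation*}
d\int f_\A(i,\psi)\,|\psi\rangle\langle\psi|\,\mathrm d\mu(\psi)=t_0\,\A(i)+(1-t_0)\tfrac{\alpha_i}{d}\,\id .
\end{equation*}

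The next step is to specify $f_\A$. The ``flat'' choice $f_\A(i,\psi)=\langle\psi|\A(i)|\psi\rangle$ already yields a joint observable, but only with noise parameter $t=1/(d+1)$ (a one-line consequence of the twirl $\int|\psi\rangle\langle\psi|^{\otimes 2}\,\mathrm d\mu=\tfrac{2}{d(d+1)}S$ onto the symmetric subspace of $\complex^d\otimes\complex^d$). To reach the larger value $t_0$ one uses instead the sharpened response function of the rank-1 POVM hidden-state model of Almeida et al.~\cite{AlPiBaToAc07} (built on Barrett's construction \cite{Ba02}): after relabelling so that the overlaps $u_1\ge u_2\ge\cdots$ are decreasing, one takes $f_\A(\cdot,\psi)$ to be a convex mixture of the flat rule with a rule concentrated on the largest-overlap outcome(s), the mixing weight being pinned down by the constraint $\sum_i f_\A(i,\psi)=1$ for $\mu$-a.e.\ $\psi$. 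This prescription transforms covariantly — replacing $\A$ by $U\A U^*$ is the same as replacing $\psi$ by $U^*\psi$ in the integral — so the output is a covariant image of $\A$, and a direct computation shows that this image is the rank-1 white-noise mixture $\wnc_a(\A(i))$; the whole problem then reduces to identifying the single parameter $a$.

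Finally I would compute $a$ by evaluating the relevant sphere integrals, which reduce to expectations of order statistics of $(u_1,\dots,u_k)$ on the simplex — the same circle of computations that produced the harmonic numbers in Proposition~\ref{prop:projections}, but now carrying the extra weights $\alpha_i$ and the free mixing parameter, which one then optimizes. I expect this to be the main obstacle: once the postprocessing is in place the structural part of the argument is routine, but extracting the closed form $t_0=\frac{(3d-1)(d-1)^{d-1}}{(d+1)d^d}$ from the simplex integrals, and verifying that the chosen mixing weight is indeed the optimal one, is where the real work lies. Along the way I would also check the routine points that ties among the $u_j$ occur on a $\mu$-null set (so $f_\A$ is well defined and measurable) and that the construction is independent of the arbitrary labelling of outcomes.
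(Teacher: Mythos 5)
Your overall scaffolding coincides with the paper's: the covariant POVM $\G(Z)=d\int_Z U\kb{\varphi_0}{\varphi_0}U^*\,{\rm d}U$ as the joint observable, the change of variables \eqref{eqn:trick}--\eqref{eqn:measure} to the unit sphere, and the observation that the Born-rule ("flat") response only reproduces $\wnc_t$ at $t=1/(d+1)$ are all as in the text (the detour through extremal rank-1 POVMs via Prop.~\ref{prop:convex} is harmless but unnecessary). The gap is in the one place where the proof actually lives: the response function. The model of Barrett \cite{Ba02} used by Almeida et al.\ \cite{AlPiBaToAc07}, and by the paper, is \emph{not} a convex mixture of the Born rule with a rule concentrated on the largest-overlap outcome; it is a threshold (acceptance) rule: writing $\A(i)=\alpha_i\kb{\phi_i}{\phi_i}$ and $\psi=U\varphi_0$, one outputs $i$ with probability $\langle\psi|\A(i)|\psi\rangle$ whenever $\vert\langle\psi|\phi_i\rangle\vert^2>1/d$, and distributes the leftover probability $\sum_j\langle\psi|\A(j)|\psi\rangle\bigl(1-\Theta(\vert\langle\psi|\phi_j\rangle\vert^2-1/d)\bigr)$ according to the weights $\tr{\A(i)}/d$. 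The structural point that makes this work is that the accepted term depends on the single effect $\A(i)$ alone, while the leftover term, once summed over $j$ under the sphere integral, collapses through $\sum_j\A(j)=\id$ to a multiple of $\alpha_i\id$; axial symmetry of the measure about $\phi_i$ then forces
\[
\int f_\A(i,U)\,{\rm d}\G(U)\;=\;a\,\A(i)+(1-a)\,\frac{\tr{\A(i)}}{d}\,\id
\]
with a constant $a$ that is \emph{independent of the POVM} $\A$, and the remaining "lengthy calculation" is the evaluation $a=t_0$.

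Your proposed rule lacks exactly this feature: the indicator "$i$ has the largest overlap" depends on the whole configuration $\{\phi_j\}$ and on the weights $\alpha_j$, so for a generic rank-1 POVM (unequal $\alpha_i$, more than $d$ outcomes, no symmetry) the corresponding integral is not of the form $a\,\A(i)+(1-a)\alpha_i\id/d$ at all — even its trace need not be proportional to $\alpha_i$ — and covariance of the assignment $\A\mapsto$ image under unitary conjugation does not force the depolarizing form; the max-overlap trick works in Prop.~\ref{prop:projections} only because there the outcomes form an orthonormal basis. Moreover the normalization $\sum_i f_\A(i,\psi)=1$ cannot "pin down" your mixing weight, since any convex combination of two normalized response functions is automatically normalized, so your construction has a free parameter and no mechanism to produce a universal constant; and since you explicitly defer the evaluation of that constant, the specific bound $t_0=\frac{(3d-1)(d-1)^{d-1}}{(d+1)d^d}$ in \eqref{eq:rid-ibc} is never obtained. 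In short: the frame is right, but the key ingredient — Barrett's threshold response function, whose special structure is what yields an $\A$-independent white-noise parameter equal to $t_0$ — is missing, and the rule you substitute for it would not deliver the statement.
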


\begin{proof}[Sketch of proof]
Following the method of \cite{Ba02}, for an observable $\A\in\mathcal{R}_1$ we define the postprocessing
\begin{align*}
f_\A (i,U) &= \Theta \bigg(\langle U\varphi_0 \vert\A(i) U\varphi_0\rangle - \tr{\A(i)}/d\bigg) \langle U\varphi_0 \vert \A(i) U\varphi_0\rangle \\
& \quad + \frac{\tr{\A(i)}}{d}\sum_j \langle U\varphi_0 \vert \A(j) U\varphi_0\rangle \\
&\quad\times \left( 1-   \Theta \bigg(\langle U\varphi_0 \vert\A(j) U\varphi_0\rangle - \tr{\A(i)}/d\bigg) \right)\\
\end{align*}
where $\Theta$ is again the Heaviside function. Let $n$ denote the number of outcomes of $\A$. Since $\A$ is a rank-1 POVM, there exist unit vectors $\phi_1,\ldots, \phi_n\in \hi$ and numbers $\alpha_1,\ldots, \alpha_n\in (0,1]$ such that $\A(i) = \alpha_i \vert \phi_i\rangle\langle \phi_i\vert $. 
This implies that 
$$
\Theta \bigg(\langle U\varphi_0 \vert\A(j) U\varphi_0\rangle - \tr{\A(j)}/d\bigg) = \Theta \left(\vert\langle U\varphi_0 \vert\phi_j \rangle\vert^2 - 1/d\right) 
$$ 
and we can proceed as in the proof of Prop.~\ref{prop:projections} by fixing an orthonormal basis of $\hi$, expressing $U\varphi_0$ in this basis, and switching the integration over the unitary group to integration over the unit sphere of $\complex^d$. After a lengthy calculation, we then obtain 
$$
\int f_\A (i,U) U\vert \varphi_0\rangle\langle \varphi_0 \vert U^*\, \mathrm{d}U = t_0\A(i) + (1-t_0) \frac{\tr{\A(i)}}{d} \id = \wnc_{t_0} (\A(i)).
$$
\end{proof}

According to Prop. \ref{rank1tib}, the set $\mathcal{R}_1$ has complete incompatibility, and hence is large enough to determine whether a given channel is incompatibility breaking.
We thus conclude the following.

\begin{theorem}
The white noise mixing channel $\wnc_t$ is incompatibility breaking if\eqref{eq:rid-ibc} holds.
\end{theorem}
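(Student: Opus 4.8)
The plan is to obtain this as an immediate corollary of the preceding proposition together with the complete incompatibility of rank-1 observables established in Proposition~\ref{rank1tib}; there is no genuine obstacle, and the only task is to assemble the pieces in the correct logical order.

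First I would invoke the preceding proposition: assuming \eqref{eq:rid-ibc} holds, the channel $\wnc_t$ breaks the incompatibility of $\mathcal{R}_1$, the set of all rank-1 observables with finitely many outcomes; that is, the family $\wnc_t(\mathcal{R}_1)$ is compatible. Next I would record the trivial monotonicity of the incompatibility-breaking property: if $\Lambda(\mathcal A)$ is compatible and $\mathcal B\subseteq\mathcal A$, then $\Lambda(\mathcal B)\subseteq\Lambda(\mathcal A)$ is compatible as well, since any subfamily of a compatible family is compatible (postprocess the joint observable down to the subfamily). In particular, since the set of all rank-1 observables with at most $d^2$ outcomes is contained in $\mathcal{R}_1$, the channel $\wnc_t$ also breaks the incompatibility of that smaller set.

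Finally, Proposition~\ref{rank1tib} states precisely that this smaller set has complete incompatibility, so by definition any channel breaking its incompatibility belongs to $\ibc$. Applying this to $\wnc_t$ yields $\wnc_t\in\ibc$, i.e. $\wnc_t$ is incompatibility breaking, which is the assertion. The entire substantive content resides in the preceding proposition (the Barrett/Almeida-type hidden-state construction) and in Proposition~\ref{rank1tib}, both of which we may assume; the present statement is purely a matter of chaining them together.
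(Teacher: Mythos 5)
Your argument is correct and matches the paper's own reasoning: the preceding proposition gives compatibility of $\wnc_t(\mathcal{R}_1)$ under \eqref{eq:rid-ibc}, and Proposition~\ref{rank1tib} (complete incompatibility of the rank-1 observables with at most $d^2$ outcomes) then yields $\wnc_t\in\ibc$. Your explicit restriction to the subset with at most $d^2$ outcomes, via the observation that a subfamily of a compatible family is compatible, simply makes precise the step the paper leaves implicit when it attributes complete incompatibility to all of $\mathcal{R}_1$.
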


\begin{figure}
\centering
            \includegraphics[width=10cm]{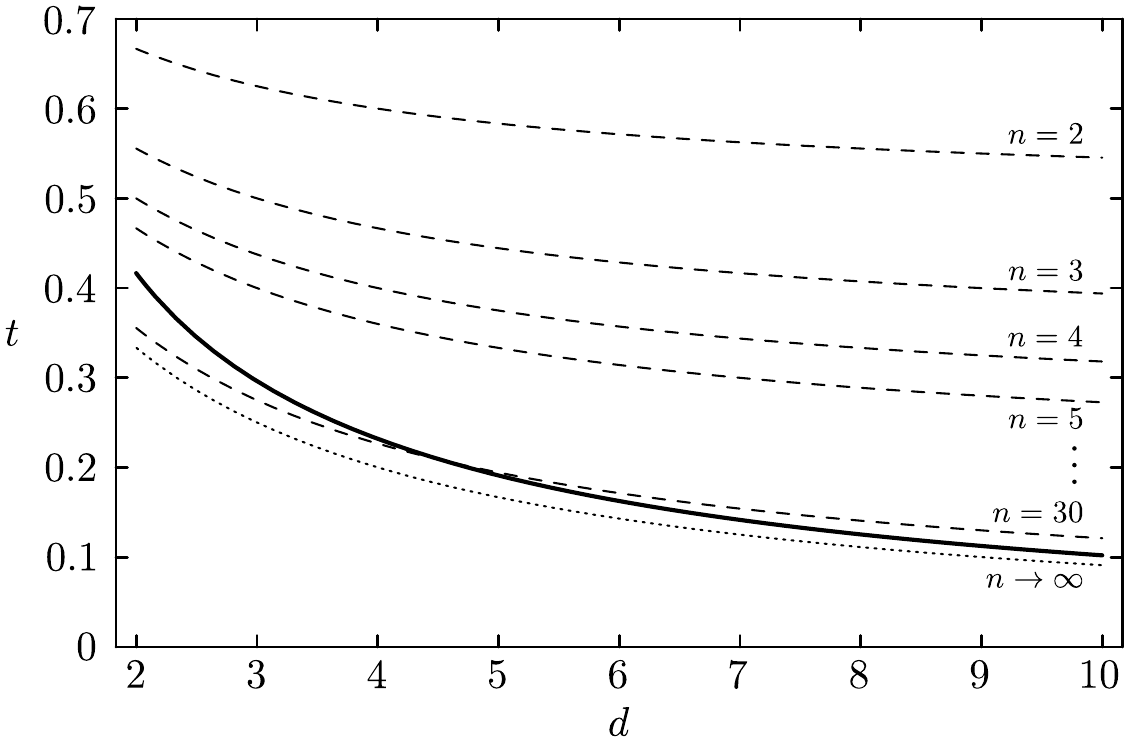} 
           \caption{The dashed lines correspond to the bound given in \eqref{eq:clonebound} for different $n$ and the solid line corresponds to the bound given in \eqref{eq:rid-ibc}.}
    \label{fig:bounds}
\end{figure}

We note that for large $n$ and small $d$, the bound in \eqref{eq:rid-ibc} can be smaller than the one given in \eqref{eq:clonebound}; see Fig. \ref{fig:bounds} for comparison.

%%%%%%%%%%%%%%%%%%%%%%%
\section{Connection to entanglement breaking channels}\label{sec:ebc}
%%%%%%%%%%%%%%%%%%%%%%%

In this section we make an important connection to the existing literature on decoherence-inducing channels: we will prove that the \emph{entanglement breaking} channels form a proper subclass of incompatibility breaking channels. A similar result has been recently obtained in \cite{Pusey15}.

\begin{theorem}\label{thm:main_theorem}
Every entanglement breaking channel is incompatibility breaking, but the converse does not hold. 
\end{theorem}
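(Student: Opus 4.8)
The plan is to prove the two assertions separately, the inclusion by an explicit construction of a single joint observable, and the strictness by producing a witnessing white-noise channel.

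\textbf{Entanglement breaking $\Rightarrow$ incompatibility breaking.} First I would invoke the Holevo--Horodecki--Shor--Ruskai structure theorem for entanglement breaking channels: $\Lambda$ is entanglement breaking if and only if it has measure-and-prepare form, i.e. there are a POVM $\{F_k\}$ (with finitely many, at most $d^2$, nonzero terms) and states $R_1,R_2,\dots$ such that $\Lambda(A)=\sum_k \tr{R_k A}\,F_k$ for all $A\in\lh$ (the Heisenberg dual of $\Lambda_*(\varrho)=\sum_k \tr{\varrho F_k}\,R_k$). Given this, for an arbitrary observable $\A$ one has $\Lambda(\A)(x)=\sum_k f_\A(x,k)\,F_k$, where $f_\A(x,k):=\tr{R_k\A(x)}\ge 0$ and $\sum_x f_\A(x,k)=\tr{R_k\id}=1$, so $f_\A$ is a legitimate post-processing kernel. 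Hence every $\Lambda(\A)$ is a post-processing of the one fixed POVM $\G:=\{F_k\}$, so $\G$ is a joint observable of the entire family $\{\Lambda(\A):\A\text{ observable}\}$; by the post-processing formulation of compatibility in Subsec.~\ref{sec:joint} this family is compatible, i.e. $\Lambda\in\ibc$. (One could instead route this through Prop.~\ref{rank1tib} and Prop.~\ref{prop:cap}, but the direct argument is shorter and handles all observables at once.)

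\textbf{The converse fails.} For this I would exhibit a single white-noise channel that is incompatibility breaking but not entanglement breaking. By the preceding theorem, $\wnc_t$ is incompatibility breaking whenever \eqref{eq:rid-ibc} holds; for $d=2$ this reads $t\le t_0=\tfrac{5}{12}$. On the other hand, the (normalised) Choi operator of $(\wnc_t)_*$ is the isotropic state $t\,\kb{\Phi^+}{\Phi^+}+(1-t)\,\id/d^2$, which is separable — equivalently $\wnc_t$ is entanglement breaking — if and only if $t\le\tfrac{1}{d+1}$, hence $t\le\tfrac13$ for $d=2$. Since $\tfrac13<\tfrac{5}{12}$, any $t$ with $\tfrac13<t\le\tfrac{5}{12}$ yields a qubit channel $\wnc_t$ that is incompatibility breaking yet not entanglement breaking, establishing strictness. (As a remark one can note that $t_0>\tfrac{1}{d+1}$ for every $d\ge 2$, being equivalent to $(3d-1)(d-1)^{d-1}>d^d$, but the qubit case alone suffices.)

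\textbf{Main obstacle.} The first half is essentially bookkeeping once the measure-and-prepare representation is in hand; the only point requiring care is that we verify the post-processing definition of compatibility, so that the single POVM $\G$ serves as a joint observable for the whole infinite collection rather than just for each finite subfamily. The genuinely load-bearing part is the second half: it relies on having the sharp separability threshold $\tfrac{1}{d+1}$ for the isotropic state together with the incompatibility-breaking bound $t_0$ from \eqref{eq:rid-ibc}, and then on the numerical comparison showing the two windows overlap. That comparison is the crux — it is exactly what forces the inclusion to be proper and is the only place where the argument could fail for a bad choice of dimension.
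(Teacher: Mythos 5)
Your proposal is correct and takes essentially the same route as the paper: the inclusion is the paper's direct argument that, in the Horodecki--Shor--Ruskai measure-and-prepare form, every $\Lambda(\A)$ is a postprocessing of the single POVM $\F$, and strictness is obtained exactly as in the paper by comparing the entanglement-breaking threshold $1/(d+1)$ of $\wnc_t$ with the incompatibility-breaking bound \eqref{eq:rid-ibc}. The only cosmetic differences are that you settle strictness with the single qubit witness $1/3 < 5/12$ whereas the paper verifies $t_0 > 1/(d+1)$ in every dimension (one witness indeed suffices for the theorem as stated), and your parenthetical claim that the POVM can be taken with at most $d^2$ outcomes is not needed for the argument (only finiteness is used), so it is best omitted.
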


We recall that a quantum channel $\Lambda$ is called \emph{entanglement breaking} if the bipartite state $(\Lambda_*\otimes {\rm Id})(\varrho)$ is separable for all initial states $\varrho$. In particular, this means that the local classical model always exists for any collection of measurements performed after the application of the channel. We denote by $\ebc$ the set of all entanglement breaking channels. The structure of these channels is well known: in a finite dimensional Hilbert space every entanglement breaking channel $\Lambda$ can be written in the form 
\begin{align}\label{eq:ebc}
\Lambda(T)=\sum_x \tr{\varrho_xT} \F(x) \, , 
\end{align}
where $\F$ is an observable with a finite number of outcomes and each $\varrho_x$ is a state \cite{HoShRu03}.
 
It is easy to see that every entanglement breaking channel is $n$-incompatibility breaking for all $n$, i.e., $\ebc\subseteq\nibc$.
Namely, let $\Lambda$ be as in \eqref{eq:ebc} and suppose that $ \A_1,\ldots,\A_n$ are incompatible observables.
We define an observable $\G$ by
\begin{align*}
\G(a_1,\ldots, a_n) = \sum_x \tr{\varrho_x \A_1(a_1)}\cdots \tr{\varrho_x \A_n(a_n)} \F(x) \, .
\end{align*}
Then
\begin{align*}
\G(X_1\times\Omega_{\A_2}\times\ldots\times \Omega_{\A_n}) = \sum_x \tr{\varrho_x \A_1(X_1)} \F(x)
\end{align*}
for $X_1\subseteq\Omega_1$, and similarly for the other marginals. 
Hence, $\G$ is a joint observable for the observables $\Lambda(\A_1), \ldots, \Lambda(\A_n)$.
This together with Prop. \ref{prop:cap} implies that every entanglement breaking channel is incompatibility breaking.
One can also see this directly by noticing that an observable $\A$ is mapped into 
\begin{equation}
\Lambda(\A(a)) = \sum_x \tr{\varrho_x \A(a)} \F(x), 
\end{equation}
that is, each $\Lambda(\A)$ is  postprocessing of the observable $\F$. 
This means that all observables $\Lambda(\A)$ are compatible, hence, $\Lambda$ is incompatibility breaking.

Our next observation is that there are also other $n$-incompatibility breaking channels than just the entanglement breaking channels.
To see this, we fix an orthonormal basis $\{\varphi_j\}_{j=1}^d$ of $\hi$ and denote 
\begin{align*}
\psi_0 = \frac{1}{\sqrt{d}} \sum_{j=1}^d \varphi_j \otimes \varphi_j \, .
\end{align*}
The pure state $\kb{\psi_0}{\psi_0}$ is maximally entangled, and a mixed state 
\begin{align*}
t \kb{\psi_0}{\psi_0} + (1-t) \frac{1}{d^2} \id
\end{align*}
 is entangled if and only if $t > \frac{1}{1+d}$ \cite{HoHo99}.
We recall that a channel $\Gamma$ is entanglement breaking if and only if the state $(\Gamma_*\otimes {\rm id})(\kb{\psi_0}{\psi_0})$ is separable \cite{HoShRu03}. 
It follows that a white noise channel $\wnc_t$ is entanglement breaking if and only if $t \leq \frac{1}{1+d}$.
A comparison with Prop. \ref{prop:nIBC-wn} shows that for each $2\leq n\leq  \dim\hi$, we have $\ebc\subsetneq\nibc$.

The remaining question is: are there other incompatibility breaking channels than just the entanglement breaking channels?
To see that the answer is positive, we consider again the white noise mixing channel $\Gamma^{\rm wn}_t$, which is entanglement breaking if and only if  $t\leq 1/(d+1)$.
The upper bound $1/(d+1)$ is smaller that the upper bound $t_0$ in \eqref{eq:rid-ibc}, hence choosing any $t$ between $1/(d+1) < t \leq t_0$ gives a channel $\wnc_t$ that is incompatibility breaking but not entanglement breaking.

What remains is to show that such a $t$ can be chosen, i.e., that $t_0>1/(d+1)$. To see this, we first write $t_0$ as 
\begin{align}
t_0 = \frac{1}{d+1}\left( 3-\frac{1}{d}\right) \left( 1-\frac{1}{d}\right)^{d-1}.
\end{align}
We can then see that $(1-1/d)^{d-1}$ is a monotonically decreasing function of $d$ with its limit being $1/e$; hence we find
\begin{align}
t_0\geq\frac{1}{d+1}\left(\frac{3}{e}-\frac{1}{de}\right).
\end{align}
The term in brackets is strictly larger than one for $d\geq 4$. For $d=2,3$ it is a simple exercise to check that $t_0 > 1/(d+1)$. Moreover, with increasing $d$ the gap between $t_0$ and $1/(d+1)$ closes, as we can bound $t_0$ easily also from above by $3/(d+1)$.
In conclusion, $\ebc \subsetneq \ibc$.

%%%%%%%%%%%%
\section{Summary}\label{sec:summary}
%%%%%%%%%%%%

\begin{figure}
\centering
            \includegraphics[width=9cm]{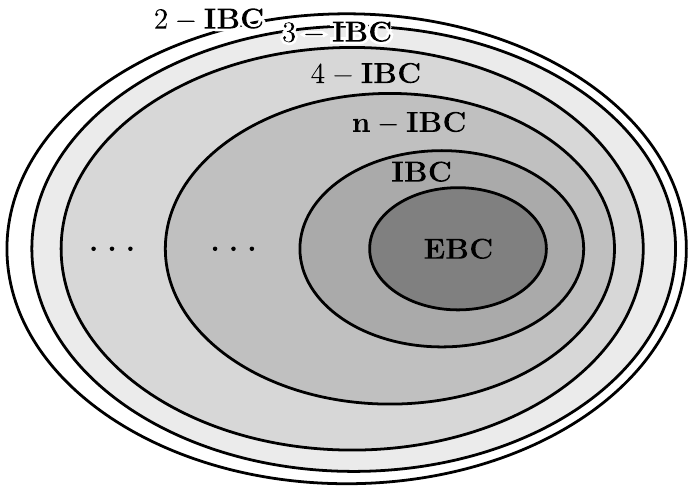} 
           \caption{Depiction of the inclusions of sets $\nibc$, $\ibc$ and $\ebc$. While we know that $\ebc\subsetneq\ibc$, $3-\ibc\subsetneq 2-\ibc$ and $\mibc\subsetneq\nibc$ for every $n$ and odd $m\geq n^2$, the strict inclusion of other combinations of sets is unclear. Still, this shows that there is a chain of strict inclusions of e.g.~sets $3^j-\ibc$ for $j\in\nat$.}
    \label{fig:sets}
\end{figure}

We have initiated the study of the evolution of quantum incompatibility under noisy channels, by considering the case where the channel completely destroys the incompatibility of relevant sets of measurements. In particular, we have defined the set $\nibc$ of channels that break the incompatibility of each collection of $n$ measurements. These sets are included within each other forming a chain $n-\ibc\subseteq\ldots\subseteq 3-\ibc\subseteq 2-\ibc$ (see Fig.~\ref{fig:sets}). While the strict inclusion of all these sets is unknown, the inclusion $3-\ibc\subsetneq 2-\ibc$ is strict as well as is the whole chain of inclusions of sets $3^j-\ibc$ for all $j\in\nat$.

Concerning the set of channels that are $\nibc$ for all $n$ we form an important set $\ibc$ of channels breaking the incompatibility of the total set of all observables. 
Furthermore, we have made an interesting connection between $\ibc$ and the set of all entanglement breaking channels, $\ebc$. 
While it is easy to show that $\ebc\subseteq\ibc$, we have also used rather nontrivial hidden variable models from \cite{Ba02} to show that there are examples of incompatibility breaking channels which do not break entanglement.

Concerning further research perspectives on this topic, one could look at channels that reduce incompatibility according to some relevant measure \cite{HeKiRe15}, without destroying it completely. Another related direction is to investigate the Heisenberg evolution of incompatible measurements in an open quantum system.

%%%%%%%%%%%
\section*{Acknowledgments}
%%%%%%%%%%%

The authors wish to thank Matthew Pusey for bringing his results from \cite{Pusey15} to our attention.
DR acknowledges support from Slovak Research and Development Agency grant APVV-0808-12 QIMABOS, VEGA grant QWIN 2/0151/15 and programme SASPRO QWIN 0055/01/01.
JS acknowledges support from the Italian Ministry of Education, University and Research (FIRB project RBFR10COAQ). JK acknowledges support from the EPSRC project EP/J009776/1.

\end{document}